\par\addvspace{\@bls \@plus 0.5\@bls \@minus 0.1\@bls}\noindent
\par\addvspace{\@bls \@plus 0.5\@bls \@minus 0.1\@bls}}
\newcommand*{\doi}[1]{\href{http://dx.doi.org/#1}{\texttt{doi: #1}}}
\algrenewcommand\algorithmicrequire{\textbf{Input:}}
\algrenewcommand\algorithmicensure{\textbf{Output:}}
\algrenewcommand\algorithmicreturn{\textbf{Return}}
\algrenewcommand\Return{\State\algorithmicreturn{} }%
\newcommand{\fflasffpack}{\texttt{FFLAS-FFPACK}\xspace}
\newcommand{\RPM}[1]{\ensuremath{\mathcal{R}_{#1}\xspace}} 
\newcommand{\RS}[1]{\ensuremath{\text{RowSupp}({#1})\xspace}}  
\newcommand{\CS}[1]{\ensuremath{\text{ColSupp}({#1})\xspace}}  
\newcommand{\K}{\mathrm{K}\xspace}
\newcommand{\Z}{\mathbb{Z}\xspace}
\newcommand{\RRP}{\text{RowRP}}
\newcommand{\CRP}{\text{ColRP}}
\newcommand{\LA}{\ensuremath{\overline{A}_1\xspace}}
\newcommand{\lup}{\text{LUP}\xspace}
\newcommand{\MM}{\texttt{MM}\xspace}
\newcommand{\ple}{\text{PLE}\xspace}
\newcommand{\cupd}{\text{CUP}\xspace}
\newcommand{\plu}{\text{PLU}\xspace}
\newcommand{\pluq}{\text{PLUQ}\xspace}
\newcommand{\trsm}{\texttt{TRSM}\xspace}
\newcommand{\rank}{\text{rank}\xspace}
\newcommand{\submat}[3]{\ensuremath{{#1}_{#2,#3}}\xspace}
\newcommand{\subvec}[2]{\ensuremath{{#1}_{#2}}\xspace}
\DeclareMathOperator{\spanningrank}{s_r}
\DeclareMathOperator{\mccoyrank}{r_{\mathcal M}}
\newenvironment{smatrix}{\left[\begin{smallmatrix}}{\end{smallmatrix}\right]}
\newcommand{\breakalgorithm}{\algstore{bkbreak}\end{algorithmic}\end{algorithm}
\begin{algorithm}[htbp]
\begin{algorithmic}\algrestore{bkbreak}}
\begin{document}

\begin{frontmatter}

\title{Fast Computation of the Rank Profile Matrix and the Generalized Bruhat Decomposition}

\thanks{This research was partly supported by the HPAC project of the French
Agence Nationale de la Recherche (ANR 11 BS02 013) and the
 \href{http://opendreamkit.org/}{OpenDreamKit} \href{https://ec.europa.eu/programmes/horizon2020/}{Horizon 2020} \href{https://ec.europa.eu/programmes/horizon2020/en/h2020-section/european-research-infrastructures-including-e-infrastructures}{European Research Infrastructures} project (\href{http://cordis.europa.eu/project/rcn/198334_en.html}{\#676541})
.}

\author{Jean-Guillaume Dumas}
\address{Université Grenoble Alpes, Laboratoire LJK, umr CNRS, BP53X, 
  51, av. des Math\'ematiques, F38041 Grenoble, France}
\ead{jean-guillaume.dumas@imag.fr}
\ead[url]{http://www-ljk.imag.fr/membres/Jean-Guillaume.Dumas/}

\author{Clément Pernet}
\address{Université Grenoble Alpes, Laboratoire de l'Informatique du
  Parall\'elisme,  Université de Lyon, France.}
\ead{clement.pernet@imag.fr}
\ead[url]{http://lig-membres.imag.fr/pernet/}

\author{Ziad Sultan}
\address{Université Grenoble Alpes, Laboratoire LJK and LIG, Inria,
  CNRS, Inovall\'ee, 655, av. de l'Europe, F38334 St Ismier Cedex,
  France}
\ead{mailto:ziad.sultan@imag.fr}
\ead[url]{http://moais.imag.fr/membres/ziad.sultan}

\begin{abstract}
The row (resp.\ column) rank profile of a matrix describes the
stair-case shape of its row (resp.\ column) echelon form.
We describe a new matrix invariant, 
the rank profile matrix, summarizing all information on the row and
column rank profiles of all the leading sub-matrices.
We show that this normal form exists and is unique over a field but also over
any principal ideal domain and finite chain ring.
We then explore the conditions for a Gaussian
elimination algorithm to compute all or part of this invariant,
through the corresponding PLUQ decomposition. This enlarges the set of known
elimination variants that compute row or column rank profiles.
As a consequence a new Crout base case variant significantly improves the
practical efficiency of previously known implementations over a finite field.
With matrices of very small rank, we also generalize the techniques of
Storjohann and Yang to the computation of the rank profile matrix,
achieving an $(r^\omega+mn)^{1+o(1)}$ time complexity for an $m\times
n$ matrix of rank $r$, where $\omega$ is the exponent of matrix
multiplication. 
Finally, we give connections to the Bruhat decomposition, and several
of its variants and generalizations. Consequently, the algorithmic
improvements made for the PLUQ factorization, and their implementation,
directly apply to these decompositions.
In particular, we show how a PLUQ decomposition revealing the rank
profile matrix also reveals both a row and a column echelon form of
the input matrix or of any of its leading sub-matrices, by a simple
post-processing made of row and column permutations. 
\end{abstract}

\begin{keyword}
Gaussian elimination, Rank profile, Echelon form, PLUQ decomposition, Bruhat
decomposition, McCoy's rank, Finite Chain ring.
\end{keyword}

\end{frontmatter}
\tableofcontents

\section{Introduction}

Triangular matrix decompositions are widely used in computational linear
algebra. Besides solving linear systems of equations, they are also
used to compute other objects more specific to exact arithmetic:
computing the rank,
sampling a vector from the null-space, computing echelon forms and
rank profiles.

The {\em row rank profile} (resp.\ {\em column rank profile}) of an $m\times n$ matrix $A$ with rank~$r$, denoted
by \RRP(A) (resp.\ \CRP(A)), is the
lexicographically smallest sequence of $r$ indices of linearly
independent rows (resp.\ columns) of $A$.
An $m\times n$ matrix has generic row (resp.\ column) rank profile if its row
(resp.\ column) rank profile is  $(1,..,r)$.
Lastly, an $m\times n$ matrix has generic rank profile if its $r$ first leading
principal minors are nonzero. Note that if a matrix has generic rank profile,
then its row and column rank profiles are generic, but the converse is false: the
matrix $\begin{smatrix}  0&1\\1&0\end{smatrix}$ does not have generic rank profile even if its row and column rank profiles
are generic.
The row support (resp.\ column support) of a matrix $A$, denoted by \RS{A}
(resp.\ \CS{A}), is the subset of indices of its nonzero rows (resp.\ columns). 

We recall that the row echelon form of an $m\times n$ matrix $A$ is an
upper triangular matrix $E=TA$, for a nonsingular matrix $T$,  with the zero rows
of $E$ at the bottom and the nonzero rows in stair-case shape:
$\min\{j:a_{i,j}\neq0\} < \min\{j:a_{i+1,j}\neq0\}$.
As $T$ is nonsingular, the column rank profile of $A$ is that of $E$, and
therefore corresponds to the column indices of the leading elements in the
staircase. Similarly the row rank profile of $A$ is composed of the row indices of the
leading elements in the staircase of the column echelon form of $A$.

\paragraph*{Rank profiles and triangular matrix decompositions.}
The rank profiles of a matrix and the triangular matrix decompositions obtained
by Gaussian elimination are strongly related.
The elimination of matrices with arbitrary rank profiles gives rise
to several matrix factorizations and many algorithmic variants.
In numerical linear algebra one often uses the \pluq decomposition, with
$P$ and $Q$ permutation matrices, $L$ a lower unit triangular matrix
and $U$ an upper triangular matrix.
The LSP and LQUP variants of~\cite{IMH:1982} have been introduced to reduce the
complexity of rank deficient Gaussian elimination to that of
matrix multiplication.
Many other algorithmic decompositions exist allowing fraction free computations
\cite{Jeffrey:2010:lufact}, in-place computations~\cite{jgd:2008:toms,JPS:2013}
or sub-cubic rank-sensitive time
complexity~\cite{Storjohann:2000:thesis,JPS:2013}.  
The reader may refer to~\cite{JPS:2013} for a detailed comparison between these
matrix factorizations, and further details on the \cupd (resp.\ \ple) variants, revealing
the row (resp.\ column) rank profiles.
All these algorithms, together with the schoolbook Gaussian elimination
algorithm share the property that, for a row rank profile computation, the pivot
search processes rows in order, and searches a pivot in all possible column position before
declaring the row linearly dependent with the previous ones.
As a consequence, blocking is limited to only one dimension (in this case
the row dimension) leading to slab algorithms~\cite{KlvdGe95} operating on rectangular
blocks of unbalanced dimensions.
This reduces the data locality of the algorithm, and therefore
penalizes the efficiency of implementations in practice. In parallel, this
blocking also puts more constrains on the dependencies between tasks~\cite{DPS15}.

\paragraph*{Contribution with respect to the state of the art.}
In \cite{DPS:2013} we proposed a first Gaussian elimination algorithm,
with a recursive splitting of both row and column dimensions, which
simultaneously computes the row and column rank profile while
preserving the sub-cubic rank-sensitive time complexity and keeping
the computation in-place. It showed that slab blocking is not a
necessary condition for a Gaussian elimination to reveal rank
profiles. Consequently, we have further analyzed the conditions on the
pivoting that reveal the rank profiles in~\cite{DPS:ELU:2015}, where
we introduced a new matrix invariant, the rank profile matrix. This
normal form contains the row and column rank profile information of
the matrix and that of all its leading sub-matrices.  

This normal form is closely related to a permutation matrix appearing
in the Bruhat decomposition~\cite{Bruhat:1956:Lie} and in related
variants~\cite{delladora:1973:these,Grigoriev:1981:bruhat,Bourbaki:2008:lie,Malaschonok:2010,Manthey:2007:Bruhat}.
Still, none of these did connect it to the notion of rank profile. In another
setting, the construction of matrix Schubert 
varieties in~\cite[Ch. 15]{MillerSturmfels:2005} defines a similar
invariant, but presents neither a matrix decomposition nor any
computational aspects. 

More precisely, the present paper gathers the key contributions
of~\cite{DPS:2013} and~\cite{DPS:ELU:2015}:
\begin{enumerate}
\item we define a new matrix
invariant over a field, the rank profile matrix, summarizing all
information on the row and column rank profiles of all the leading
sub-matrices;
\item we study the conditions for a Gaussian elimination algorithm to compute
  all or part of this invariant, through the corresponding PLUQ
  decomposition;
\item as a consequence, we show that the classical iterative CUP
decomposition algorithm can actually be adapted to compute the rank
profile matrix. Used, in a Crout variant, as a base-case to our
recursive implementation over a finite field, it delivers a significant improvement in efficiency;
\item we also show that both the row and the column echelon forms of a
  matrix can be recovered from some PLUQ decompositions thanks to an
  elementary post-processing algorithm.
\end{enumerate}

Further, we develop three novel aspects:
\begin{enumerate}
\item we study how the notion of rank profile matrix can be generalized over an arbitrary
ring. We show that none of the common definition of rank over an arbitrary
commutative ring allow to define a rank profile matrix in general. However, over
a principal ideal domain and over a finite chain ring, we can produce such a definition.
\item we make further connections with existing matrix
decompositions, in particular the Bruhat and the generalized Bruhat
decompositions, exhibiting both the row and the column echelon in a single decomposition;
\item lastly, we extend the recent algorithmic improvements 
of~\cite{CKL:13,StoYan14,StoYan15} for low rank matrices: 
indeed, we show here that the algorithm in~\cite{StoYan14} 
computes the rank profile matrix;  propose an algorithmic variant thereof,
reducing the leading constant by a factor of three; and lastly show an
algorithmic reduction computing the rank profile matrix in time
bounded by $(r^\omega+mn)^{1+o(1)}$ with a Las-Vegas probabilistic algorithm.
\end{enumerate}

\paragraph*{Organization of the article.}
We first introduce in Section~\ref{sec:rkp} the rank profile matrix $\RPM{A}$,
and study in Section~\ref{sec:rpm:ring} its generalization over an arbitrary
ring.
We then study in Section~\ref{sec:WhenPLUQ} under which condition a \pluq
decomposition algorithms reveals the rank profile
structure of a matrix. For this, we investigate existing and new pivoting
strategies, based on all combination of elementary search and permutation
operations, showing for each of them what part of the rank profile information is being computed.
In particular we propose three new pivoting strategies that compute the rank profile
matrix.
As an illustration, we show in Section~\ref{sec:algo} how these pivoting
strategies instantiate in iterative or recursive algorithms, using slab or tile
blocking. Connections are made to the most common elimination algorithms and we
state in full details the recent tile recursive algorithm of~\cite{DPS:2013},
implementing one of the new pivoting strategy. 
Section~\ref{sec:basecase} shows how this better understanding on the pivoting strategies
has resulted in the design of an iterative Crout \cupd decomposition with rotations,
to be used as a base case for the tile recursive algorithm, speeding up the
computation efficiency, while still recovering the rank profile matrix
invariant.
We then show in Section~\ref{sec:relations} how a \pluq decomposition revealing
the rank profile matrix relates with other triangular matrix decompositions,
such as the LEU and the classical, modified or generalized Bruhat
decompositions, or the computation of both row and
column echelon forms, from a single \pluq decomposition.
Lastly, we extend in Section~\ref{sec:lowrank} the recent algorithms
of~\cite{StoYan14,StoYan15} for the row or column rank profile of matrices with low
rank to computed the rank profile matrix within the same complexities.

\paragraph*{Notations.}
In the following, $0_{m\times n}$ denotes the $m\times n$ zero matrix. For two
list of indices $\mathcal{P}$ and $\mathcal{Q}$,
$\submat{A}{\mathcal{P}}{\mathcal{Q}}$ denotes the sub-matrix of $A$ formed by
the rows of index in $\mathcal{P}$ and the columns of index in $\mathcal{Q}$.
In particular, $A_{i..j,k..l}$ denotes the contiguous block of coefficients in $A$
of rows position between $i$ and $j$ and columns position between $k$ and
$l$. We may denote by $*$ the sequence of all possible row or column indices:
e.g. $\submat{A}{i}{*}$ denotes the $i$-th row of $A$.
To a permutation $\sigma:\{1,\dots,n\}\rightarrow \{1,\dots,n\}$ we define the associated
permutation matrix $P(\sigma)$, permuting rows by left multiplication: the rows
of $P(\sigma) A$ are that of $A$ permuted by $\sigma$. Reciprocally, for a
permutation matrix $P$, we denote by $\sigma(P)$ the associated permutation.

\section{The rank profile matrix}
\label{sec:rkp}

We propose in Theorem~\ref{thm:rankprofilematrix} the definition of the rank profile
matrix, an invariant summarizing all information on the rank profiles of a
matrix. As will be discussed in this section and in Section~\ref{sec:relations},
this invariant is closely related to the Bruhat decomposition~\cite{Bruhat:1956:Lie} and its
generalizations~\cite{Grigoriev:1981:bruhat,Tyrtyshnikov:1997:Bruhat,MillerSturmfels:2005}.

\subsection{Definition over a field}
We first consider matrices over an arbitrary commutative field $\K$.

\begin{defn} An $r$-sub-permutation matrix is a matrix of rank $r$ with only  $r$
  non-zero entries equal to one.
\end{defn}

\begin{lem}
An $m\times n$ $r$-sub-permutation matrix
has at most one non-zero entry per row and per column,
and
can be written $P
  \begin{smatrix}
    I_r\\&0_{(m-r)\times (n-r)}
  \end{smatrix}Q$ where $P$ and $Q$ are permutation matrices.
\end{lem}

\begin{thm}\label{thm:rankprofilematrix}
 Let $A\in\K^{m\times n}$ of rank $r$. There exists a unique $m\times n$
 $r$-sub-permutation matrix $\RPM{A}$ 
 of which every leading sub-matrix has the same
 rank as the corresponding
 leading sub-matrix of $A$. 
 This sub-per\-mu\-ta\-tion matrix is called the \em{rank profile matrix} of~$A$.
\end{thm}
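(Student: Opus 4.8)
The plan is to prove existence and uniqueness separately, both by induction on the matrix dimensions (or equivalently on the number of leading submatrices to be controlled).

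For existence, I would argue by induction on $m+n$, building $\RPM{A}$ from $\RPM{A'}$ where $A'$ is obtained from $A$ by deleting its last row (or column). Concretely, suppose $A$ is $m\times n$ and let $A'=A_{1..m-1,1..n}$; by induction there is an $(m-1)\times n$ sub-permutation matrix $R'$ agreeing in rank with every leading submatrix of $A'$. I would set $\RPM{A}$ to be $R'$ with a new zero row appended, unless $\rank(A_{1..m,1..j}) = \rank(A_{1..m-1,1..j}) + 1$ for some $j$; in that case, letting $j_0$ be the smallest such $j$, I would place a single $1$ in position $(m,j_0)$. One then checks two things: (i) column $j_0$ has no other $1$ in $R'$ (because adding row $m$ increased the rank at column $j_0$ but not before it, so row $m$ restricted to the first $j_0$ columns is independent from the previous rows, forcing the rank jump to be "new" in column $j_0$), so the result is still a sub-permutation matrix; and (ii) every leading submatrix $A_{1..i,1..j}$ still has the correct rank — for $i<m$ this is inherited from $R'$, and for $i=m$ it follows from the choice of $j_0$ together with the fact that rank increases by at most one when a single row is added. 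This last verification is the main technical content, and the key lemma I would isolate is: for a fixed row index $i$, the function $j\mapsto \rank(A_{1..i,1..j}) - \rank(A_{1..i-1,1..j})$ is nondecreasing in $j$ and takes values in $\{0,1\}$, so there is a well-defined threshold column where the $i$-th row "becomes dependent".

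For uniqueness, I would show that any $r$-sub-permutation matrix $R$ whose leading submatrices match the ranks of those of $A$ is forced entry by entry. The position of the $1$ in row $i$ of $R$ (if any) is determined by the two integers $\rank(A_{1..i,1..j})$ as $j$ ranges over $1,\dots,n$ compared with $\rank(A_{1..i-1,1..j})$: row $i$ of $R$ is nonzero iff $\rank(A_{1..i,1..n}) > \rank(A_{1..i-1,1..n})$, and then the column of its unique $1$ is exactly the smallest $j$ with $\rank(A_{1..i,1..j}) > \rank(A_{1..i-1,1..j})$, since a sub-permutation matrix has at most one $1$ per row and per column (by the Lemma) and its leading ranks are simply counts of $1$'s in the leading block. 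Because these quantities depend only on $A$, $R$ is unique.

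The main obstacle I anticipate is verifying in the existence step that appending the new $1$ in column $j_0$ does not collide with an existing $1$ in that column of $R'$, i.e. that the sub-permutation structure is preserved; this is where one must use that the rank jump at column $j_0$ is genuinely caused by the new row and is not "absorbed" by a pivot already accounted for in an earlier row. Phrasing everything in terms of the monotone step function $j \mapsto \rank(A_{1..i,1..j})$ and its increments in both the $i$ and $j$ directions should make both the collision-freeness and the rank-matching bookkeeping transparent, and the same bookkeeping immediately yields uniqueness.
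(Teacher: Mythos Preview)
Your proposal is correct and follows essentially the same approach as the paper: both prove existence by induction on the row dimension, appending either a zero row or a single $1$ at the first column $j_0$ where the rank jumps when the new row is added, and both identify the collision-freeness check (that column $j_0$ of $R'$ is empty) as the main point to verify, which the paper handles by the same rank-counting contradiction you sketch. Your uniqueness argument (each row of $R$ is forced by the rank data) is a direct reformulation of the paper's lexicographically-minimal-difference argument, and your monotone step-function lemma is just a convenient packaging of the rank inequalities the paper uses inline.
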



\begin{pf}
  We prove existence by induction  on the row dimension.
 of the leading  submatrices.

  If $A_{1,1..n} = 0_{1\times n}$, setting $\RPM{}^{(1)} = 0_{1\times n}$ satisfies the
  defining condition.
  Otherwise, let $j$ be the index of the first invertible element in
  $A_{1,1..n}$ and set $\RPM{}^{(1)}= e_j^T$ the j-th $n$-dimensional canonical row
  vector, which satisfies the defining condition.

  Now for a given $i\in\{1,\dots,m\}$, suppose that there is a unique $i\times n$
  rank profile matrix $\RPM{}^{(i)}$ such that $\rank(A_{1..\ell,1..j}) =
  \rank(\RPM{1..\ell,1..j})$ for every $j\in\{1..n\}$ and $\ell\in\{1..i\}$. 
  If $\rank(A_{1..i+1,1..n})=\rank(A_{1..i,1..n})$, then $\RPM{}^{(i+1)}=
  \begin{smatrix}
    \RPM{}^{(i)}\\0_{1\times n}
  \end{smatrix}$. 
  Otherwise, consider $k$, the smallest column index such that
  $\rank(A_{1..i+1,1..k})=\rank(A_{1..i,1..k})+1$ and set $\RPM{}^{(i+1)}=
  \begin{smatrix}
    \RPM{}^{(i)}\\e_k^T
  \end{smatrix}$. Define $
  \begin{smatrix}
    B & u \\
    v^T & x 
  \end{smatrix} = A_{1..i+1,1..k}$,
 where $u,v$ are vectors and $x$ is a scalar.
By definition of $k$, we have $\rank(B) = \rank(
\begin{smatrix}  B\\v^T\end{smatrix})$.

First we show that $\RPM{}^{(i+1)}$ is an $r_{i+1}$-sub-per\-mu\-ta\-tion
matrix. 
If not, then the $k$-th column of $\RPM{}^{(i)}$ would contain a $1$ which, by
induction hypothesis, would imply that
$\rank(\begin{bmatrix}B&u\end{bmatrix})=\rank(B)+1$.
Hence we would have $\rank(\begin{smatrix}  B&u\\v^T&x\end{smatrix}) 
= \rank(\begin{bmatrix}  B&u\end{bmatrix}) + 1 = \rank(B)+2 = \rank(
\begin{smatrix}  B\\v^T\end{smatrix})+2$, a contradiction.

Then we show that any leading sub-matrix $A_{1..s,1..t}$ of $A$ with $s\leq i+1,
t\leq n$ has the same rank as the corresponding leading sub-matrix of $\RPM{}^{(i+1)}$. 
The case $s\leq i$ is covered by the induction; 
second,  since  $\rank(B) = \rank(\begin{smatrix}  B\\v^T\end{smatrix})$, 
any leading sub-matrix of $\begin{smatrix}B\\v^T\end{smatrix}$
has the same rank as the corresponding sub-matrix of
$\RPM{}^{(i+1)}$, which covers the case $t<k$.
Lastly, for $s=i+1$ and $t\geq k$, the definition of $k$ implies
 $\rank(A_{1..s,1..t}) =\rank(A_{1..i,1..t})+1 =
 \rank(\RPM{1..i,1..t}^{(i)})+1 =  \rank(\RPM{1..i,1..t}^{(i+1)})$.

  To prove uniqueness, suppose there exist two distinct rank profile
  matrices $\RPM{}^{(1)}$ and $\RPM{}^{(2)}$ for a given matrix $A$ and let
  $(i,j)$ be the lexicographically minimal coordinates where
  $\RPM{i,j}^{(1)}\neq\RPM{i,j}^{(2)}$. The rank of the
  $(i,j)$-leading submatrices of $\RPM{}^{(1)}$ and $\RPM{}^{(2)}$ differ but should both
  be equal to $\rank(A_{1..i,1..j})$, a contradiction.
\end{pf}

\begin{exmp}$A= 
  \begin{smatrix}
    2 & 0 & 3& 0\\
    1 & 0 & 0& 0\\
    0 & 0 & 4 &0\\
    0 & 2 & 0 &1\\
  \end{smatrix}
  $ has $\RPM{A} = 
  \begin{smatrix}
    1& 0& 0& 0\\
    0& 0& 1& 0 \\
    0& 0& 0& 0\\
    0& 1& 0& 0
  \end{smatrix}
  $ for rank profile matrix over $\mathbb{Q}$.
\end{exmp}

\begin{rem}
  The permutation matrix introduced in the {\em modified Bruhat
    decomposition} of~\cite{Tyrtyshnikov:1997:Bruhat}, and defined
  there only for invertible matrices, is also the matrix $E$
  introduced in Malaschonok's LEU
  decomposition~\cite[Theorem~1]{Malaschonok:2010}.
  In the latter paper, an algorithm for this decomposition was
  only shown over a field for $m=n=2^k$, and no 
  connection was made to the relation with ranks and rank profiles.
  We have shown in~\cite[Corollary~1]{DPS:2013}  that $E$ is in fact the rank profile
  matrix and made the connection to the PLUQ decomposition explicit, as recalled
  in Section~\ref{sec:relations}. 
  We here generalize the existence to arbitrary rank $r$ and
  dimensions $m$ and $n$ and after proving its
  uniqueness, we propose this definition as a new matrix normal
  form. 
 \end{rem}


The rank profile matrix has the following properties:
\begin{lem}\label{lem:rpm:prop} Let $A$ be a matrix.
\newcounter{myenum}
  \begin{enumerate}
  \item   $\RPM{A}$ is {\em diagonal} if and only if $A$ has {\em generic rank profile}.
  \item $\RPM{A}$ is a {\em permutation} matrix if and only if $A$ is  {\em invertible} 
  \item $\RRP(A) = \RS{\RPM{A}}$; $\CRP(A) = \CS{\RPM{A}}$.
\setcounter{myenum}{\theenumi}
\end{enumerate}
Moreover, for all $1\leq i\leq m$ and $1\leq j\leq n$, we have:
\begin{enumerate}\setcounter{enumi}{\themyenum}
  \item $\RRP(A_{1..i,1..j}) = \RS{(\RPM{A})_{1..i,1..j}}$ 
  \item $\CRP(A_{1..i,1..j}) = \CS{(\RPM{A})_{1..i,1..j}}$,
  \end{enumerate}
\end{lem}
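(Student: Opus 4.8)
The plan is to derive all five items from the uniqueness part of Theorem~\ref{thm:rankprofilematrix} together with the elementary combinatorics of sub-permutation matrices. The first step is to observe that restriction to leading sub-matrices commutes with the rank profile matrix: for any $i,j$ the block $(\RPM{A})_{1..i,1..j}$ is a sub-permutation matrix (being a sub-matrix of one), its rank equals $\rank(A_{1..i,1..j})$, and each of its own leading sub-matrices $(\RPM{A})_{1..i',1..j'}$ (with $i'\le i$, $j'\le j$) is a leading sub-matrix of $\RPM{A}$, hence has the same rank as $A_{1..i',1..j'}$; by uniqueness in Theorem~\ref{thm:rankprofilematrix} this forces $(\RPM{A})_{1..i,1..j}=\RPM{A_{1..i,1..j}}$. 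Consequently items~(4) and~(5) are exactly item~(3) applied to $A_{1..i,1..j}$, so it suffices to prove item~(3).

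For item~(3) I would use the greedy characterisation of the row rank profile: an index $i$ lies in $\RRP(M)$ if and only if $\rank(M_{1..i,*})=\rank(M_{1..i-1,*})+1$ (the empty leading block having rank $0$), since the $i$-th row is independent of the previously selected rows exactly when it is independent of all rows of smaller index. Applying this to both $A$ and $\RPM{A}$, and using that corresponding leading row-blocks have equal rank (Theorem~\ref{thm:rankprofilematrix} with $j=n$), reduces item~(3) to the identity $\RS{\RPM{A}}=\{i:\rank((\RPM{A})_{1..i,*})>\rank((\RPM{A})_{1..i-1,*})\}$. Here the sub-permutation structure does the work: since $\RPM{A}$ has at most one non-zero entry per row, the rank of its first $i$ rows equals the number of non-zero rows among them, that is $|\RS{\RPM{A}}\cap\{1,\dots,i\}|$, and the identity is immediate. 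The equality $\CRP(A)=\CS{\RPM{A}}$ follows by transposition, using $\RPM{A^T}=\RPM{A}^T$ (once more by uniqueness) and $\CRP(A)=\RRP(A^T)$.

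For item~(1), generic rank profile means $\rank(A_{1..k,1..k})=k$ for $k\le r$; since $\rank(A_{1..i,1..j})\le\min(i,j,r)$ always, while $A_{1..k,1..k}$ with $k=\min(i,j,r)$ is contained in $A_{1..i,1..j}$, we get $\rank(A_{1..i,1..j})=\min(i,j,r)$ for all $i,j$. The matrix $\begin{smatrix}I_r\\&0_{(m-r)\times(n-r)}\end{smatrix}$ is an $r$-sub-permutation matrix with exactly this pattern of leading ranks, so by uniqueness it equals $\RPM{A}$, which is diagonal. For item~(2), if $A$ is invertible then $m=n=r$, so $\RPM{A}$ is an $n\times n$ matrix with $n$ non-zero entries, all equal to one, and with at most one per row and per column; hence it has exactly one per row and per column, i.e.\ it is a permutation matrix.

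I do not expect a real obstacle here: the only points requiring care are the repeated appeals to uniqueness to identify restrictions (and the transpose) of $\RPM{A}$ with the rank profile matrices of the corresponding matrices, and the bookkeeping equating the rank of a leading row- or column-block of a sub-permutation matrix with the cardinality of its support in that range.
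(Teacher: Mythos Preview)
Your proof is correct. The paper itself does not supply a proof of this lemma; it is stated as an immediate consequence of the defining property in Theorem~\ref{thm:rankprofilematrix} and left to the reader. Your argument is exactly the kind of verification the authors have in mind: the key observation that $(\RPM{A})_{1..i,1..j}=\RPM{A_{1..i,1..j}}$ (by uniqueness) cleanly reduces items~(4)--(5) to item~(3), and the greedy characterisation of the row rank profile via the rank increments $\rank(M_{1..i,*})-\rank(M_{1..i-1,*})$ is the natural way to transfer item~(3) from $A$ to $\RPM{A}$. The transpose trick $\RPM{A^T}=\RPM{A}^T$ and the explicit identification $\RPM{A}=\begin{smatrix}I_r\\&0\end{smatrix}$ under the generic rank profile hypothesis are likewise the intended arguments. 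Nothing is missing.
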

These properties show how to recover the row and column rank profiles
of $A$ and of any of its leading sub-matrix.

\subsection{Generalization over a commutative ring with unity}\label{sec:rpm:ring}

\newcommand{\Ring}{\ensuremath{\mathrm{R}}\xspace}

We now explore if and under which  condition, the notion of rank profile matrix
can be generalized over an arbitrary commutative ring with unity. 
As the rank profile matrix over a field relies on the notion
of rank, which generalization over an arbitrary ring leads to several
definitions, we review the most common of these definitions and explore their
capacity to define a rank profile matrix invariant.

Let $\Ring$ be a commutative ring with unity.
Following~\cite[\S 4]{Brown:1992:matring}, we will denote by $I_t(A)$ the $t$-th
determinantal ideal, namely the ideal in $\Ring$ generated by all 
$t\times t$ minors of $A$, and set by definition $I_0=\Ring$. We have $\Ring =I_0 \supseteq I_1(A) \supseteq I_2(A)
\supseteq \dots \supseteq (0)$.
We will also denote by $\text{Ann}(S) = \{x\in \Ring : \forall y \in S, xy=0\}$ the
annihilator ideal of a set $S$.

\begin{defn}\label{def:ringrank} For an $m\times n$ matrix $A$ over a communtative ring $\Ring$,
  the rank of $A$ can be defined as:

  \begin{enumerate}
  \item $r_\text{Span} = \min \{t : A=BC \text{ where } B \text{ is } m\times
    t \text{ and } C \text{ is } t\times n\}$, called spanning rank
    by~\cite{Brown:1998:spanningrank} or Schein rank by \cite[\S 2.4]{Rao:2002},
  \item $r_\text{McCoy} = \max\{t : \text{Ann}(I_t(A))=(0)\}$ introduced
    in~\cite[Theorem~51]{McCoy:1948:rings},
  \item $r_\text{Det} = \max\{t : A \text{ has a nonzero } t\times t \text{
      minor}\}$ \cite[\S 2.4]{Rao:2002} (this is also $\max\{t : I_t(A)\neq (0)\}$ defined in~\cite[\S 4, Ex. 11]{Brown:1992:matring}),
  \item $r_\text{UnitMinor} = \max\{t : A \text{ has a unit } t\times t \text{
      minor}\}$ used in~\cite[Def.~2.5]{Norton:2000:fcr}.
  \end{enumerate}
\end{defn}

We illustrate the value of these ranks in Table~\ref{tab:ringranks} for three
$2\times 2$ matrices (columns~3, 6, 9) and their leading $1\times 2$ and 
$2\times 1$ sub-matrices (columns~2, 5, 8). These ranks will be used to argue
whether a rank profile matrix can be defined in each case (columns~4, 7 and 10 if it exists).
\begin{table}[h]
  \centering
\setlength{\tabcolsep}{.88\tabcolsep}
  \begin{tabular}{l|ccc|ccc|ccc}
    \toprule
&\multicolumn{3}{c|}{over $\Z/4\Z$}& \multicolumn{6}{c}{over $\Z/6\Z$}\\
& $\begin{smatrix} 0 & 2\end{smatrix}$, $\begin{smatrix} 0 \\ 2\end{smatrix}$ &$\begin{smatrix} 0 & 2 \\ 2 & 1\end{smatrix}$
& $\RPM{  \begin{smatrix}    0&2\\2& 1  \end{smatrix}}$ 
& $\begin{smatrix}2& 3\end{smatrix}$, $\begin{smatrix}2\\ 3\end{smatrix}$  
& $\begin{smatrix} 2 & 3 \\ 3 & 0\end{smatrix}$
& $\RPM{  \begin{smatrix}    2&3\\3& 0  \end{smatrix}}$ 
& $\begin{smatrix}2\\1\end{smatrix}$, $\begin{smatrix}2&3\end{smatrix}$ 
& $\begin{smatrix} 2 & 3 \\ 1 & 1\end{smatrix}$
& $\RPM{  \begin{smatrix}   2&3\\1& 1  \end{smatrix}}$ 
\\
    \midrule
$r_\text{Span}$&  1 & 1 &none&1 &  2 &$\begin{smatrix}1&0\\0&1\end{smatrix}$& 1 & 2 &$\begin{smatrix} 1&0\\0&1\end{smatrix}$
\\
$r_\text{McCoy}$&  0 & 1 &$\begin{smatrix}0&0\\0&1\end{smatrix}$&1 & 1 & none &1
  &2& 
  $\begin{smatrix}0&1 \\ 1&0  \end{smatrix}$
  \\
$r_\text{Det}$&  1 & 1 &none &1  & 2 & $\begin{smatrix}1&0\\0&1\end{smatrix}$ &1 & 2 & $\begin{smatrix} 1&0\\0&1\end{smatrix}$ \\
$r_\text{UnitMinor}$&  0 & 0 &$\begin{smatrix} 0&0\\0&0\end{smatrix}$&0 &  0 & $\begin{smatrix}0&0\\0&0\end{smatrix}$&1,0 &2 & none\\
\bottomrule
  \end{tabular}

  \caption{Ranks of matrices according to the various definitions. These three
    $2\times 2$ matrices are counterexamples showing that none of the four rank
    definitions of Definition~\ref{def:ringrank} can be used to define a rank profile
    matrix in general.}
\label{tab:ringranks}
\end{table}

For instance $r_\text{Span}(\begin{smatrix} 0&2\\2&1\end{smatrix}) = 1$ since $\begin{smatrix} 0&2\\2&1\end{smatrix}=
\begin{smatrix} 2\\1\end{smatrix}
\begin{smatrix} 2&1\end{smatrix}
$ over $\Z/4\Z$. 
This matrix shows that $r_\text{Span}$ can not be used to define a rank profile
matrix. Indeed, if it would exist, its first row would be $\begin{smatrix}
  0&1\end{smatrix}$ (as the $1\times 1$ leading submatrix has rank 0 and the
$1\times 2$ leading submatrix has rank 1). Similarly the first column of this
rank profile matrix would be $\begin{smatrix}  0&1\end{smatrix}^T$. But the rank
of the permutation matrix $\begin{smatrix}  0&1\\1&0\end{smatrix}$ would be
$2\neq r_\text{Span}(\begin{smatrix}  0&2\\2&1\end{smatrix})$.
We will use the same reasoning pattern, to show that $r_\text{Det}$ can not define
a rank profile matrix for $\begin{smatrix}  0&2\\2&1\end{smatrix}$, and neither
does $r_\text{McCoy}$ for $\begin{smatrix}  2&3\\3&0\end{smatrix}$ over $\Z/6\Z$: in these two
cases, the rank profile matrix should be $
\begin{smatrix}  0&1\\1&0\end{smatrix}$ to satisfy the rank condition on the
$2\times 1$  and $1\times 2$ sub-matrices, but the $2\times 2$ rank is only 1.
Laslty, $r_\text{UnitMinor}(\begin{smatrix}  2&3\end{smatrix})=0$ and
$r_\text{UnitMinor}(\begin{smatrix}2\\ 1\end{smatrix})=1$, therefore the rank
profile matrix of $\begin{smatrix}  2&3\\1&1\end{smatrix}$
should be of the form $\begin{smatrix}  0&0\\1&*\end{smatrix}$,
but it can then never be of rank 2, as the matrix~$\begin{smatrix} 2&3\\1&1\end{smatrix}$. 

Remark also that the rank profile matrix invariant is strongly connected with
elimination as will 
be presented in the next sections. It therefore needs to be based on a notion of
rank that is stable with multiplication by invertible matrices. This is the case
with McCoy's rank~\cite[4.11.(c)]{Brown:1992:matring}, but not with
$r_\text{UnitMinor}$. Indeed the rank of $\begin{smatrix} 2&1&\end{smatrix} =
\begin{smatrix} 2&3\end{smatrix} \begin{smatrix}  1&-1\\0&1\end{smatrix}$ is 1
whereas the rank of $\begin{smatrix}  2&3\end{smatrix}$ is 0.

Consequently there is no notion of rank over an arbitrary commutative ring with
unity supporting the definition of a rank profile matrix. We will now show that with additional
assumptions on the ring, some of the above definitions for the rank coincide and
make the existence of the rank profile matrix possible.

\subsubsection{Over a principal ideal domain}

Over a principal ideal domain, the existence of an underlying field of fractions
guarantees that the McCoy rank, the spanning rank and the rank over the field of fractions
coincide.
\newcommand{\Domain}{\textrm{D}\xspace}
\begin{cor} Let $\Domain$ be a principal ideal domain (PID) and let
  $A\in\Domain^{m\times{}n}$ with McCoy's rank $r$.  
 There exists a unique $m\times n$ $r$-sub-permutation matrix $\RPM{A}$ 
 of which every leading sub-matrix has the same
 rank as the corresponding
 leading sub-matrix of $A$. 
 This sub-per\-mu\-ta\-tion matrix is called the \em{rank profile
   matrix} of $A$.
\end{cor}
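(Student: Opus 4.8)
The plan is to reduce the statement over a PID $\mathbb{D}$ to the already-proven field case (Theorem~\ref{thm:rankprofilematrix}) by passing to the field of fractions $\mathbb{K} = \operatorname{Frac}(\mathbb{D})$. First I would observe that for a PID, McCoy's rank of $A$ equals the ordinary rank of $A$ viewed as a matrix over $\mathbb{K}$: a minor is a unit in $\mathbb{D}$ iff it is nonzero, and more to the point, the largest $r$ such that some $r\times r$ minor is a unit coincides — after clearing denominators is not even needed here since $A$ already has entries in $\mathbb{D}$ — with the largest $r$ such that some $r\times r$ minor is nonzero in $\mathbb{K}$, which is exactly $\operatorname{rank}_{\mathbb{K}}(A)$. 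Actually one must be slightly careful: McCoy's rank is defined via units, and a nonzero nonunit minor in $\mathbb{D}$ (e.g.\ the $1\times 1$ minor $2$ over $\mathbb{Z}$) is still nonzero over $\mathbb{K}$; but the characterization ``largest $r$ with a selection of $r$ columns having zero right nullspace'' over $\mathbb{D}$ must be matched against the nullspace over $\mathbb{K}$. The cleanest route is to note that for a matrix over a PID (indeed over any integral domain), McCoy's rank equals the rank over the fraction field — this follows because over a domain, having $r$ columns with trivial right nullspace over $\mathbb{D}$ is equivalent to those columns being $\mathbb{K}$-linearly independent (a $\mathbb{K}$-dependence can be scaled to a $\mathbb{D}$-dependence). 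So McCoy's rank of every leading submatrix $A_{1..i,1..j}$ equals its rank over $\mathbb{K}$.

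Next I would simply apply Theorem~\ref{thm:rankprofilematrix} to $A\in\mathbb{K}^{m\times n}$: it produces a unique $m\times n$ $r$-sub-permutation matrix $\RPM{A}$ (with $r = \operatorname{rank}_{\mathbb{K}}(A)$) whose every leading submatrix has the same $\mathbb{K}$-rank as the corresponding leading submatrix of $A$. Since $\RPM{A}$ has entries $0$ and $1$ only, it is a matrix over $\mathbb{D}$, so it is a legitimate candidate. By the rank identification of the previous paragraph, ``same $\mathbb{K}$-rank'' is the same as ``same McCoy's rank'', so $\RPM{A}$ satisfies the defining condition of the corollary. This gives existence.

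For uniqueness, the argument is identical to the one in the proof of Theorem~\ref{thm:rankprofilematrix}: if two $r$-sub-permutation matrices over $\mathbb{D}$ both have the required leading-rank property, take the lexicographically minimal coordinate $(i,j)$ where they differ; then the $(i,j)$-leading submatrices differ in rank (again, sub-permutation matrices are $0/1$, so their McCoy's rank is just the count of ones in the leading block, which jumps at $(i,j)$), yet both must equal the McCoy's rank of $A_{1..i,1..j}$ — a contradiction. The main obstacle, and the only place requiring genuine care rather than transcription, is the first step: establishing rigorously that McCoy's rank over a PID coincides with the fraction-field rank, and in particular that this holds for every leading submatrix simultaneously, so that the field-case invariant transfers verbatim. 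Everything downstream is then a direct invocation of Theorem~\ref{thm:rankprofilematrix} together with the observation that its output is already defined over $\mathbb{D}$.
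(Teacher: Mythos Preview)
Your proposal is correct and follows essentially the same route as the paper: pass to the fraction field $\K=\mathrm{Frac}(\mathbb{D})$, use that over a PID McCoy's rank coincides with $\rank_\K$, and then invoke Theorem~\ref{thm:rankprofilematrix}. The paper compresses the rank-equality step into a one-line citation (\cite[Proposition~1.6]{Brown:1998:spanningrank}, giving $\spanningrank=\mccoyrank=\rank_\K$ over a PID), whereas you sketch the argument directly via linear independence over $\mathbb{D}$ versus $\K$; otherwise the proofs are the same.
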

\begin{pf}
From~\cite[Proposition 1.6]{Brown:1998:spanningrank}, over a PID with
field of fractions $\K$, $\spanningrank=\mccoyrank=\rank_\K$. Thus
$\RPM{A}$ over $\K$ satisfies the requirements over $\Domain$ and is the
unique such matrix.
\end{pf}

\subsubsection{Over a finite chain ring}




A finite chain ring is a finite commutative ring with identity which ideals are ordered by
inclusion~\citep{ClLi73}. Equivalently, it is a finite local ring which maximal
ideal is principal. These rings include $\Z/p^k\Z$ and $\text{GF}(p^k)$ for $p$
prime, as well as all Galois rings of characteristic $p^k$.


Over a finite chain ring, McCoy rank and the UnitMinor rank coincide, and therefore
allow to define a rank profile matrix, as shown next.

\begin{lem}\label{lem:mccoyeq}
Over a finite chain ring, McCoy's rank is the largest positive  integer  
$r\leq\{m,n\}$ such that there exist a unit $r\times{}r$
minor of A. 
\end{lem}

\begin{pf}
Krull's theorem~\cite[Theorem~2]{Krull1938} (see also
\cite[Proposition~4]{hungerford1968}) states that in a local ring, an
element is a unit if and only if it is not in the maximal ideal.
As the maximal ideal of the ring is principal, let $g$ be a generator thereof.
Then $g$ is nilpotent.
Indeed the ring is finite so there exists indices $i<j$ such that
$g^i=g^j$. Therefore $g^i(1-g^{j-i})=0$. But as $g$ is not a unit,
$g^{j-i}$ is not a unit either, but then $1-g^{j-i}$ must be a unit
(otherwise the maximal ideal would contain $1$ and would not be proper). 
Therefore $g^i=0$.
The nilpotency index of $g$ is then the smallest positive
integer $\nu$ such that $g^\nu=0$.

Let $r=r_\text{McCoy}(A)$ and suppose that all $r\times r$ minors are
non-unit. They belong to the maximal ideal and $I_r(A) = (g^i)$ for some $1\leq
i < \nu$. Therefore, $\text{Ann}(I_r(A)) = (g^{\nu-i})$, a contradiction.
\end{pf}

\begin{cor}[{\cite[Corollary~2.7]{Norton:2000:fcr}}]\label{cor:unitmccoy}
Over a finite chain ring $\Ring$, with maximal ideal $(g)$, McCoy's rank is the rank over the field $R/gR$.
\end{cor}

\begin{cor} Let $\Ring$ be a finite chain ring and let
  $A\in\Ring^{m\times{}n}$ with McCoy's rank $r$.  
 There exists a unique $m\times n$ $r$-sub-permutation matrix $\RPM{A}$ 
 of which every leading sub-matrix has the same
 McCoy's rank as the corresponding
 leading sub-matrix of $A$. 
 This sub-per\-mu\-ta\-tion matrix is called the \em{rank profile
   matrix} of $A$.
\end{cor}
\begin{pf}
From Corollary~\ref{cor:unitmccoy}, we consider the residue field
$\K=\Ring/g\Ring$. Then
$\RPM{A}$ over $\K$ satisfies the requirements over $\Ring$ and is the
unique such matrix.
\end{pf}



\section{When does a PLUQ algorithm reveal the rank profile matrix?}
\label{sec:WhenPLUQ}
From now on, for the sake of simplicity, we consider algorithms over a field.
\subsection{Ingredients of a \pluq decomposition algorithm}
\label{sec:structurePLUQ}
Over a field, the LU 
decomposition generalizes to matrices with arbitrary rank
profiles, using row and column permutations   (in some cases such as
the CUP, or LSP decompositions, the row permutation is embedded in the structure of the $C$ or $S$
matrices). 
However such \pluq decompositions are not unique and not all of them will
necessarily reveal rank profiles and echelon forms.
We will characterize the conditions for a \pluq decomposition algorithm to
reveal the row or column rank profile or the rank profile matrix.

We consider the four types of operation of a Gaussian elimination
algorithm in the processing of the $k$-th pivot:
\begin{description}
  \item[Pivot search:] finding an element to be used as a pivot,
  \item[Pivot permutation:] moving the pivot in diagonal position $(k,k)$  by column and/or row permutations,
  \item[Update:] applying the elimination at position $(i,j)$: 
$a_{i,j}\leftarrow a_{i,j} -a_{i,k}a_{k,k}^{-1}a_{k,j}$,
  \item[Normalization:] dividing the $k$-th row (resp.\ column) by the pivot.
\end{description}
Choosing how each of these operation is done, and when they are scheduled
results in an elimination algorithm.  Conversely, any Gaussian elimination
algorithm computing a \pluq decomposition can be viewed as a set of
specializations of each of these operations together with a scheduling. 


The choice of doing the normalization on rows or columns only determines which
of $U$ or $L$ will be unit triangular. The scheduling of the updates vary
depending on the type of algorithm used: iterative, recursive, slab or tiled block splitting,
with right-looking, left-looking or Crout variants (see~\cite{DDSV98}).
Neither the normalization nor the update impact the capacity to reveal rank
profiles and we will thus now focus on the pivot search and permutation. 

Choosing a search and a permutation strategy sets the
 matrices $P$ and  $Q$ of the \pluq decomposition obtained and, as we
will see, determines the ability to recover information on the rank profiles.
Once these  matrices are fixed, the $L$ and the $U$ factors are unique.
We therefore introduce the pivoting matrix.
\begin{defn}\label{def:PivMat}
The pivoting matrix of a \pluq decomposition $A=PLUQ$ of rank $r$ is the $r$-sub-permutation matrix
$$\Pi_{P,Q}=P
\begin{smatrix}
  I_r\\&0_{(m-r)\times (n-r)}
\end{smatrix}
Q
.$$
\end{defn}
The $r$ nonzero elements of $\Pi_{P,Q}$ are  located at the initial positions of
the pivots in the matrix $A$. Thus $\Pi_{P,Q}$ summarizes the choices
made in the search and  permutation operations.

\begin{subsubsection}{Pivot search}
%
The search operation vastly differs depending on the field of application. In
numerical dense linear algebra, numerical stability is the
main criterion for the selection of the pivot. In sparse linear algebra, the pivot
is chosen so as to reduce the fill-in produced by the update operation.
In order to reveal some information on the rank profiles, a notion of precedence
has to be used: a usual way to compute the row rank profile is to search in a
given row for a pivot and only move to the next row if the current row was found to be all
zeros. This guarantees that
each pivot will be on the first linearly independent row, and therefore the row
support of $\Pi_{P,Q}$ will be the row rank profile.
The precedence here is that the pivot's coordinates must minimize the order
for the first coordinate (the row index). 
As a generalization, we consider the most common preorders of the cartesian product $\{1,\ldots
m\}\times \{1,\ldots n\}$ inherited from the natural orders of each of its
components and describe the corresponding search strategies, 
minimizing this preorder: 
\begin{description}
  \item[Row order:] $(i_1,j_1)\preceq_{\text{row}} (i_2,j_2)$ iff $i_1\leq i_2$:
    {\em search for any invertible element in the first nonzero row.}
  \item[Column order:] $(i_1,j_1)\preceq_{\text{col}} (i_2,j_2)$ iff $j_1\leq
    j_2$. 
    {\em search for any invertible element in the first nonzero column.}
  \item[Lexicographic order:] $(i_1,j_1)\preceq_{\text{lex}} (i_2,j_2)$ iff $i_1<i_2$ or $i_1=i_2$ and
    $j_1 \leq j_2$:
    {\em search for the leftmost nonzero element of the first nonzero row.}
  \item[Reverse lexicographic order:] $(i_1,j_1)\preceq_{\text{revlex}} (i_2,j_2)$ iff $j_1<j_2$ or
    $j_1=j_2$ and $i_1 \leq i_2$: {\em search for the topmost  nonzero element
      of the first nonzero column.}
  \item[Product order:]\index{product order} $(i_1,j_1)\preceq_{\text{prod}} (i_2,j_2)$ iff
    $i_1\leq i_2$ and $j_1\leq j_2$:
    {\em search for any nonzero element at position $(i,j)$ being the
      only nonzero of the leading $(i,j)$ sub-matrix.}
\end{description}
\begin{exmp}
Consider the matrix
$
\begin{smatrix}
  0 & 0 & 0 & a & b\\
  0 & c & d & e & f\\
  g & h & i & j & k\\
  l & m & n & o & p
\end{smatrix}$, where each literal  is a nonzero element.
The nonzero elements minimizing each preorder are the following:

\begin{center}
  \begin{tabular}{llll}
Row order &
 $a,b$ &
Column order&
$g,l$\\
Lexicographic order&
$a$ &
Reverse lexic. order&
$g$\\
Product order&
$a,c,g$\\
\end{tabular}
\end{center}
\end{exmp}


\end{subsubsection}

\begin{subsubsection}{Pivot permutation}

The pivot permutation moves a pivot from its initial position to the
leading diagonal. Besides this constraint all possible choices are left for the
remaining values of the permutation. 
Most often, it is done by row or column transpositions, as it clearly involves
a small amount of data movement.
However, 
these transpositions can break the precedence relations in the set of rows or
columns, and can therefore prevent the recovery of the rank profile information.
A pivot permutation that leaves the precedence relations unchanged will be
called  $k$-monotonically increasing.
\begin{defn}
  A permutation of $\sigma \in \mathcal{S}_n$ is called
  $k$-mono\-ton\-i\-cal\-ly increasing if its last $n-k$ values 
  form a monotonically increasing sequence.
\end{defn}
In particular, the last $n-k$ rows of the associated
row-permutation matrix $P_\sigma$
are in row echelon form.
For example, the cyclic shift between indices $k$ and $i$, with $k<i$
defined as $R_{k,i}=(1,\ldots,k-1,i,k,k+1,\ldots,i-1,i+1,\ldots,n)$, that we will call a
$(k,i)$-rotation, is an elementary $k$-monotonically increasing permutation.
\begin{exmp} The $(1,4)$-rotation $R_{1,4}=(4,1,2,3)$ is a
  $1$-mono\-to\-ni\-cal\-ly increasing permutation. Its row permutation matrix
  is 
$\begin{smatrix}
    0& & & 1\\
    1&    &   & \\
    &1& & \\
    & & 1&0\\
  \end{smatrix}$. 
\end{exmp}

Monotonically increasing permutations can be composed as stated in
Lemma~\ref{lem:permcompo}. 
\begin{lem}\label{lem:permcompo}
If $\sigma_1 \in \mathcal{S}_n$ is a $k_1$-monotonically increasing permutation
and $\sigma_2\in \mathcal{S}_{k_1} \times \mathcal{S}_{n-k_1}$ a
$k_2$-monotonically increasing permutation with $k_1<k_2$ then the
permutation $\sigma_2 \circ \sigma_1$ is a $k_2$-monotonically increasing
permutation. 
\end{lem}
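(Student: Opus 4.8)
The plan is to unfold the definitions of the two permutations, track what happens to the tail of each, and then appeal to the fact that a monotone sequence restricted to a tail stays monotone. Write $\sigma_1\in\mathcal{S}_n$ for the $k_1$-monotonically increasing permutation and $\sigma_2\in\mathcal{S}_{k_1}\times\mathcal{S}_{n-k_1}$ for the $k_2$-monotonically increasing permutation, with $k_1<k_2$. Here $\sigma_2$ fixes the blocks $\{1,\dots,k_1\}$ and $\{k_1+1,\dots,n\}$ setwise, and since it is $k_2$-monotonically increasing its values on the last $n-k_2$ positions form an increasing sequence; because $k_2>k_1$ those positions all lie in the second block, so $\sigma_2$ restricted to $\{k_1+1,\dots,n\}$ is itself a $(k_2-k_1)$-monotonically increasing permutation of that block.

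First I would fix an index $\ell$ with $k_2<\ell\le n$ and compute $(\sigma_2\circ\sigma_1)(\ell)=\sigma_2(\sigma_1(\ell))$. Since $\ell>k_2>k_1$, monotonicity of $\sigma_1$ on its last $n-k_1$ values gives $\sigma_1(k_1+1)<\sigma_1(k_1+2)<\dots<\sigma_1(n)$, and in particular $\sigma_1(\ell)$ ranges, as $\ell$ runs over $\{k_1+1,\dots,n\}$, over exactly the set $\sigma_1(\{k_1+1,\dots,n\})$. The key point is that $\sigma_1$ being $k_1$-monotone forces $\sigma_1(\{k_1+1,\dots,n\})=\{k_1+1,\dots,n\}$ as a set (an increasing sequence of $n-k_1$ distinct values taken from $\{1,\dots,n\}$ whose complement is a permutation of the first $k_1$ positions must be precisely the top $n-k_1$ integers). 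Hence $\sigma_1$ maps $\{k_1+1,\dots,n\}$ monotonically onto itself, so for $\ell>k_1$ we may apply $\sigma_2$'s second block.

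Then I would combine the two monotonicities: for $k_2<\ell<\ell'\le n$ we have $\sigma_1(\ell)<\sigma_1(\ell')$ with both values in $\{k_1+1,\dots,n\}$, and moreover both $\sigma_1(\ell),\sigma_1(\ell')\ge k_2$ — this last inequality is what needs a short argument, and it is the only mildly delicate point. It follows because $\sigma_1$ restricted to $\{k_1+1,\dots,n\}$ is the increasing enumeration of $\{k_1+1,\dots,n\}$, hence $\sigma_1(\ell)=\ell$-th smallest element $=\ell$ for $\ell$ in that range; more carefully, an increasing bijection of a totally ordered finite set to itself is the identity, so $\sigma_1(\ell)=\ell$ for all $\ell>k_1$. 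Consequently $\sigma_1(\ell)=\ell>k_2$ and $\sigma_1(\ell')=\ell'>k_2$, so monotonicity of $\sigma_2$ on its last $n-k_2$ values yields $\sigma_2(\sigma_1(\ell))<\sigma_2(\sigma_1(\ell'))$, i.e. $(\sigma_2\circ\sigma_1)(\ell)<(\sigma_2\circ\sigma_1)(\ell')$. Since this holds for all $k_2<\ell<\ell'\le n$, the last $n-k_2$ values of $\sigma_2\circ\sigma_1$ form an increasing sequence, which is exactly the assertion that $\sigma_2\circ\sigma_1$ is $k_2$-monotonically increasing.

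The main obstacle is the bookkeeping observation that a $k$-monotonically increasing permutation is necessarily the identity beyond position $k$: once that is in hand (it is immediate, since the tail is an increasing sequence of $n-k$ distinct elements of $\{1,\dots,n\}$ and must therefore be $k+1<k+2<\dots<n$), the composition statement falls out directly. I would state that observation as a one-line remark and then give the two-line argument above.
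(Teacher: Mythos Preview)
Your argument rests on a false claim. You assert that a $k_1$-monotonically increasing permutation $\sigma_1$ must be the identity on $\{k_1+1,\dots,n\}$, reasoning that ``an increasing bijection of a totally ordered finite set to itself is the identity.'' But $\sigma_1$ restricted to $\{k_1+1,\dots,n\}$ is not, in general, a bijection of that set to \emph{itself}: it is an increasing injection into $\{1,\dots,n\}$ whose image may be any $(n-k_1)$-element subset. The paper's own example immediately preceding the lemma already refutes your claim: the rotation $R_{1,4}=(4,1,2,3)$ is $1$-monotonically increasing, yet $R_{1,4}(2)=1$, $R_{1,4}(3)=2$, $R_{1,4}(4)=3$, so $R_{1,4}(\{2,3,4\})=\{1,2,3\}\neq\{2,3,4\}$. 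Your parenthetical justification (``whose complement is a permutation of the first $k_1$ positions'') confuses positions with values: the image $\sigma_1(\{1,\dots,k_1\})$ is merely some $k_1$-element subset of $\{1,\dots,n\}$, not necessarily $\{1,\dots,k_1\}$.

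Since your entire argument reduces to ``$\sigma_1$ is the identity past position $k_1$, hence $\sigma_2\circ\sigma_1$ agrees with $\sigma_2$ there,'' the error is fatal. The paper's approach does not need $\sigma_1$ to fix its tail pointwise; it only uses that $\sigma_1$ is \emph{order-preserving} on inputs beyond position $k_1$. The block hypothesis $\sigma_2\in\mathcal{S}_{k_1}\times\mathcal{S}_{n-k_1}$ is precisely what allows the two monotonicities to chain: the increasing tail of one permutation lands entirely inside the range of positions on which the other is monotone, so composing preserves the order. Rewrite the argument using only order-preservation on the tail, together with the block structure of $\sigma_2$, and drop the identity claim altogether.
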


 \begin{pf}
   The last $n-k_2$ values of $\sigma_2 \circ \sigma_1$ are the image of a
   sub-sequence of $n-k_2$ values from the last $n-k_1$ values of $\sigma_1$
   through the  monotonically increasing function~$\sigma_2$.
 \end{pf}

Therefore an iterative algorithm, using rotations as elementary pivot
permutations, maintains the property that the permutation matrices $P$ and $Q$ at
any step $k$ are $k$-monotonically increasing. A similar property also applies
with recursive algorithms. 
\end{subsubsection}

\subsection{How to reveal rank profiles}
\label{sec:cond}



A PLUQ decomposition reveals the row (resp.\ column) rank profile if it can be
read from the first $r$ values of the permutation matrix $P$ (resp.\ $Q$).
Equivalently, by Lemma~\ref{lem:rpm:prop}, this means that the row (resp.\ column) support of the pivoting
matrix $\Pi_{P,Q}$ equals that of the rank profile matrix.

\begin{defn}
  The decomposition $A=PLUQ$ reveals:
  \begin{compactenum}
  \item the row rank profile if $\RS{\Pi_{P,Q}}= \RS{\RPM{A}}$,
  \item the column rank profile if $\CS{\Pi_{P,Q}}= \CS{\RPM{A}}$,
  \item the rank profile matrix if $\Pi_{P,Q}=\RPM{A}$.
  \end{compactenum}
\end{defn}

\begin{exmp}\label{ex:rankprofile}
$A= 
  \begin{smatrix}
    2 & 0 & 3& 0\\
    1 & 0 & 0& 0\\
    0 & 0 & 4 &0\\
    0 & 2 & 0 &1\\
  \end{smatrix}
  $ 
  has $\RPM{A} = 
  \begin{smatrix}
    1& 0& 0& 0\\
    0& 0& 1& 0 \\
    0& 0& 0& 0\\
    0& 1& 0& 0
  \end{smatrix}
  $ for rank profile matrix over $\mathbb{Q}$.
Now the pivoting matrix obtained from a \pluq decomposition with a pivot search
operation following the row order (any column, first nonzero row) could be the matrix
  $\Pi_{P,Q} = 
\begin{smatrix}
  0&0&1&0\\
  1&0&0&0\\
  0&0&0&0\\
  0&1&0&0\\
\end{smatrix}
$. As these matrices share the same row support, the matrix $\Pi_{P,Q}$ reveals the row
rank profile of $A$.
\end{exmp}

\begin{rem}\label{rem:SwapsConterex}
  Example~\ref{ex:rankprofile} suggests that a pivot search strategy 
  minimizing row and column indices could be a sufficient condition to recover both row
  and column rank profiles at the same time, regardless the pivot permutation.
  However, this is unfortunately not the case. Consider for
  example a search based on  the lexicographic order (first nonzero column of
  the first nonzero row) with transposition permutations, run on the matrix:
  $A= 
  \begin{smatrix}
    0 & 0 & 1\\
    2 & 3 & 0\\
  \end{smatrix}$. Its rank profile matrix is $\RPM{A} = 
  \begin{smatrix}
    0&0&1\\
    1&0&0
  \end{smatrix}
  $ whereas the pivoting matrix would be 
$
  \Pi_{P,Q}=\begin{smatrix}
    0&0&1\\
    0&1&0
  \end{smatrix}
  $, which does not reveal the column rank profile.
 This is due to the fact that the column transposition performed for the
  first pivot changes the order in which the columns will be inspected in
  the search for the second pivot. 
\end{rem}

We will show that if the pivot permutations preserve the order in which the still
unprocessed columns or rows appear, then the pivoting matrix will equal the rank
profile matrix. This is achieved by the monotonically increasing permutations. 
%
%
%
Theorem~\ref{th:RPandPerm} shows how the ability of a \pluq
decomposition algorithm to recover the rank profile information relates to the
use of monotonically increasing permutations.
More precisely, it considers an arbitrary step in a PLUQ decomposition where $k$
pivots have been found in the elimination of an $\ell\times p$ leading sub-matrix $A_1$
of the input matrix~$A$.

\begin{thm}
\label{th:RPandPerm}
  Consider a partial \pluq decomposition of an $m\times n$ matrix $A$:
\[
A = P_1
\begin{bmatrix}
  L_1 \\ M_1 & I_{m-k}
\end{bmatrix}
\begin{bmatrix}
  U_1 & V_1\\
      & H
\end{bmatrix}
Q_1
\] where $
\begin{bmatrix} L_1\\M_1\end{bmatrix}$ is $m\times k$ lower triangular and 
$\begin{bmatrix}  U_1 & V_1\end{bmatrix}$ is $ k\times n$ upper triangular,
and  let $A_1$
be some $\ell \times p$ leading sub-matrix  of $A$, for $\ell,p\geq k$.
Let $H=P_2L_2U_2Q_2$ be a \pluq decomposition of $H$.
Consider the \pluq decomposition
\[
A=\underbrace{P_1
\begin{bmatrix}
  I_k\\& P_2
\end{bmatrix}}_{P}
\underbrace{
\begin{bmatrix}
  L_1\\P_2^TM_1&L_2
\end{bmatrix}}_L
\underbrace{\begin{bmatrix}
  U_1&V_1Q_2^T\\
  &U_2
\end{bmatrix}}_{U}
\underbrace{\begin{bmatrix}
  I_k\\& Q_2
\end{bmatrix}
Q_1}_{Q}.
\]

Consider the following clauses:
  \begin{compactenum}[(i)]
  \item $\RRP(A_1) = \RS{\Pi_{P_1,Q_1}}$ \label{clause:rrp1}
  \item $\CRP(A_1) = \CS{\Pi_{P_1,Q_1}}$ \label{clause:crp1}
  \item $\RPM{A_1} = \Pi_{P_1,Q_1} $ \label{clause:rpm1}
  \item $\RRP(H) = \RS{\Pi_{P_2,Q_2}}$ \label{clause:rrp2}
  \item $\CRP(H) = \CS{\Pi_{P_2,Q_2}} $ \label{clause:crp2}
  \item $\RPM{H} = \Pi_{P_2,Q_2}$ \label{clause:rpm2}
  \end{compactenum}
\begin{compactenum}[(i)]
 \setcounter{enumi}{6}
\item $P_1^T$ is $k$-monotonically increasing or ($P_1^T$ is $\ell$-mono\-tonically
    increasing and $p=n$)\label{clause:PMI} \label{clause:Pmonotinc}
\item $Q_1^T$ is $k$-monotonically increasing or ($Q_1^T$ is $p$-mono\-tonically
    increasing and $\ell=m$)\label{clause:Qmonotinc}
\end{compactenum}
%
%
Then,
\begin{compactenum}[(a)]
\item if (\ref{clause:rrp1}) or (\ref{clause:crp1}) or (\ref{clause:rpm1}) then $H= \begin{bmatrix} 0_{(\ell-k)\times(p-k)}&*\\ *&* \end{bmatrix}$\label{th:H}
\item if (\ref{clause:Pmonotinc})
then ((\ref{clause:rrp1}) and (\ref{clause:rrp2})) $\Rightarrow \RRP(A)=\RS{\Pi_{P,Q}} $;\label{th:RP:row}

\item if (\ref{clause:Qmonotinc})  
then 
((\ref{clause:crp1}) and (\ref{clause:crp2})) $\Rightarrow \CRP(A) = \CS{\Pi_{P,Q}} $;\label{th:RP:col}

\item if (\ref{clause:Pmonotinc}) and (\ref{clause:Qmonotinc})  then (\ref{clause:rpm1}) and (\ref{clause:rpm2}) $\Rightarrow  \RPM{A}=\Pi_{P,Q}$.\label{th:RP:both}
\end{compactenum}

\end{thm}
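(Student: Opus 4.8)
The proof is structured around the block structure of the factorization $A = PLUQ$ given in the statement, where the pivoting matrix decomposes naturally as a sum of two contributions: the $k$ pivots found in the first stage (recorded by $\Pi_{P_1,Q_1}$, after possible re-permutation by $P_2$ and $Q_2$) and the pivots found in the recursive elimination of $H$ (recorded by $\Pi_{P_2,Q_2}$). The first thing I would establish is part~(\ref{th:H}): if the first stage correctly captured the row rank profile (or column rank profile, or rank profile matrix) of the leading $\ell\times p$ submatrix $A_1$, then all $k$ pivots lie inside $A_1$ and, by the defining property of the rank profile matrix (Theorem~\ref{thm:rankprofilematrix}), the rank of $A_1$ is exactly $k$; hence the trailing $(\ell-k)\times(p-k)$ block of $H$ (which is a Schur complement of a rank-$k$ leading block inside $A_1$) must vanish. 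This is the key structural fact that lets the pivots of $H$ be interpreted as "genuinely new" pivots sitting strictly below and to the right of the region already cleared.

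**Main argument for (b), (c), (d).** For part~(b), I would argue that the nonzero positions of $\Pi_{P,Q}$ are, as a set, the union of the positions recorded by $\Pi_{P_1,Q_1}$ and a shifted copy of the positions recorded by $\Pi_{P_2,Q_2}$ embedded via the monotonically increasing permutation $P_1^T$ (restricted to its last $m-k$ coordinates) and $Q_1$. The hypothesis that $P_1^T$ is $k$-monotonically increasing guarantees that this embedding preserves the relative order of the rows not yet used as pivots: a row that was "first linearly independent among the remaining rows" in $H$ is mapped to the row that is "first linearly independent among the remaining rows" in $A$. Combined with clause~(\ref{clause:rrp1}) (the first stage already put its pivots on the first $k$ linearly independent rows of $A_1$, which because of part~(\ref{th:H}) and $\ell$-monotonicity or $p=n$ are also the first $k$ linearly independent rows of $A$) and clause~(\ref{clause:rrp2}), an induction on $k$ shows the row support of $\Pi_{P,Q}$ equals $\RRP(A)$. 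Part~(c) is the transpose statement, proved identically by symmetry with $Q_1^T$ playing the role of $P_1^T$. For part~(d), I would combine the row-side and column-side arguments: clauses~(\ref{clause:rpm1}) and~(\ref{clause:rpm2}) give both the row and column positions exactly, and the two monotonicity hypotheses ensure the embedding of the grid of pivot positions of $H$ into the grid for $A$ is order-preserving in both coordinates, so that every leading submatrix of $\Pi_{P,Q}$ has the right rank; uniqueness in Theorem~\ref{thm:rankprofilematrix} then forces $\Pi_{P,Q} = \RPM{A}$.

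**Where the difficulty lies.** The routine part is bookkeeping the block positions and checking that $\Pi_{P,Q}$ really is the union of the two shifted pivoting matrices — this follows by writing out $P = P_1\,\mathrm{diag}(I_k,P_2)$ and $Q = \mathrm{diag}(I_k,Q_2)\,Q_1$ and expanding the definition of the pivoting matrix. The delicate point, and the one I would spend the most care on, is the interaction in clause~(\ref{clause:Pmonotinc}) between the two alternatives: "$P_1^T$ is $k$-monotonically increasing" versus "$P_1^T$ is $\ell$-monotonically increasing and $p=n$". In the first alternative one only knows the order is preserved on rows beyond index $k$, which is exactly the set of rows that the second stage will process, so the induction goes through directly; in the second alternative one has a stronger monotonicity but only on the submatrix $A_1$, and one must use $p=n$ to argue that no pivot of $H$ can appear in a column outside $A_1$ in a way that would disturb the row rank profile — here part~(\ref{th:H}) is used again to confine the nonzero structure of $H$. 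Making the case analysis airtight, and in particular checking that the second alternative genuinely suffices when $A_1$ is a strict leading submatrix, is the main obstacle; everything else reduces to the order-preservation lemma for composed monotone permutations (Lemma~\ref{lem:permcompo}) and the defining rank property of $\RPM{\cdot}$.
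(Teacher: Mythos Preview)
Your proposal is correct and follows essentially the same route as the paper: decompose $\Pi_{P,Q}=\Pi_{P_1,Q_1}+E_1\Pi_{P_2,Q_2}F_1$ (where $E_1,F_1$ are the trailing blocks of $P_1,Q_1$), use a rank argument on $A_1$ for part~(\ref{th:H}), and then exploit the column-echelon structure of $E_1$ (guaranteed by $k$-monotonicity, or by $\ell$-monotonicity together with $p=n$ forcing the top $\ell-k$ rows of $H$ to vanish) to transport $\RRP(H)$ and $\RPM{H}$ faithfully into $A$. Two small slips to clean up: the first-stage pivots in $\Pi_{P,Q}$ are \emph{not} repermuted by $P_2,Q_2$ (those act only on the last $m-k$ and $n-k$ coordinates), and there is no ``induction on $k$'' inside this theorem---it is a one-step statement, with the induction occurring only when the theorem is applied iteratively in Section~\ref{sec:algo}.
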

\begin{pf}
Let $P_1=
\begin{bmatrix}
  P_{11} & E_1
\end{bmatrix}$ and $Q_1=
\begin{bmatrix}
  Q_{11}\\F_1
\end{bmatrix}$
where  $E_1$ is $m\times(m-k)$ and  $F_1$ is $(n-k)\times n$. 
On one hand we have
\begin{eqnarray}
  A &=& 
\underbrace{\begin{bmatrix} P_{11}&E_1\end{bmatrix}
  \begin{bmatrix} L_1\\M_1 \end{bmatrix}
  \begin{bmatrix} U_1&V_1 \end{bmatrix}
  \begin{bmatrix} Q_{11}\\F_1\end{bmatrix}}_{B} + 
    E_1HF_1. \label{eq:RP:PLUQ}
\end{eqnarray}

On the other hand,
  \begin{eqnarray}
    \Pi_{P,Q} &=& P_1 
    \begin{bmatrix}
      I_k\\&P_2
    \end{bmatrix}
    \begin{bmatrix}
      I_r\\&0_{(m-r)\times(n-r)}
    \end{bmatrix}
    \begin{bmatrix}
      I_k\\&Q_2
    \end{bmatrix}Q_1\notag \\
    &=& P_1 
    \begin{bmatrix}
      I_k\\&\Pi_{P_2,Q_2}
    \end{bmatrix}
    Q_1 \notag = \Pi_{P_1,Q_1} + E_1\Pi_{P_2,Q_2}F_1.
    \label{eq:RP:pi}
  \end{eqnarray}

Let $\LA = 
\begin{bmatrix}
  A_1&0\\
  0&0_{(m-\ell)\times(n-p)}
\end{bmatrix}
$ and denote by $B_1$ the $\ell\times p$ leading sub-matrix of~$B$.

\begin{compactenum}[(a)]
\item 
The clause \eqref{clause:rrp1} or \eqref{clause:crp1} or \eqref{clause:rpm1} implies that all $k$ pivots of the partial
  elimination were found within the $\ell\times p$ sub-matrix
  $A_1$. Hence $\rank(A_1)=k$ and we can write 
$P_1=\begin{bmatrix}\begin{array}{c} P_{11}\\0_{(m-\ell)\times k}\end{array}&E_1\end{bmatrix}$
 and $Q_1=\begin{bmatrix} Q_{11}& 0_{k\times(n-p)}\\ \multicolumn{2}{c}{F_1}\end{bmatrix}$,
 and the matrix $A_1$ writes 
 \begin{equation}\label{eq:A1B1}
A_1 = \begin{bmatrix}I_\ell &0 \end{bmatrix}
A
\begin{smatrix}I_p\\0 \end{smatrix} 
= 
B_1 +
 \begin{bmatrix} I_\ell&0 \end{bmatrix}E_1HF_1\begin{smatrix} I_p\\0 \end{smatrix}.
\end{equation}
Now $\rank(B_1)=k$ as a sub-matrix of $B$ of rank $k$ and since
\begin{eqnarray*}
B_1        &=& 
\begin{bmatrix}P_{11}&  \begin{bmatrix}  I_\ell&0 \end{bmatrix}\cdot E_1\end{bmatrix}
\begin{bmatrix} L_1\\M_1\end{bmatrix}
\begin{bmatrix} U_1&V_1\end{bmatrix}
\begin{bmatrix} Q_{11}\\F_1 \cdot \begin{smatrix} I_p\\0 \end{smatrix}\end{bmatrix}  \\
&=&  P_{11}L_1U_1Q_{11} + 
        \begin{bmatrix} I_\ell&0 \end{bmatrix} E_1M_1
        \begin{bmatrix}U_1&V_1\end{bmatrix} Q_1 \begin{smatrix} I_p\\0 \end{smatrix}
  \end{eqnarray*}
where the first term, $P_{11}L_1U_1Q_{11}$, has rank $k$ and the second term has a
disjoint row support.

Finally, consider the term   $\begin{bmatrix}
  I_\ell&0 \end{bmatrix}E_1HF_1\begin{smatrix} I_p\\0 \end{smatrix}$ of
equation~\eqref{eq:A1B1}.  As its row
 support  is disjoint with that of the pivot rows of $B_1$, it has to
be composed of rows linearly dependent with the pivot rows of $B_1$ to ensure
that $\rank(A_1)=k$. As its
column support is disjoint with that of the pivot columns of $B_1$, we conclude
that it must be the zero matrix.
Therefore the
leading $(\ell-k)\times (p-k)$ sub-matrix of $E_1HF_1$ is zero. 
\item From~(\ref{th:H}) we know that $A_1= B_1$. Thus
  $\RRP(B) = \RRP(A_1)$. Recall that $A=B+E_1HF_1$.
No pivot row of $B$ can be made linearly dependent by adding rows of $E_1HF_1$,
as the column position of the pivot is always zero in the latter
matrix. For the same reason, no pivot row of $E_1HF_1$ can be made linearly
dependent by adding rows of $B$.
From~\eqref{clause:rrp1}, the set of pivot rows of $B$ is $\RRP(A_1)$,
which shows that 
\begin{equation}
\RRP(A)=\RRP(A_1)\cup \RRP(E_1HF_1).
\label{eq:rrpa}
\end{equation}

  Let $\sigma_{E_1}:\{1..m-k\}\rightarrow \{1..m\}$ be the map representing the
  sub-permutation $E_1$ (i.e. such that $E_1[\sigma_{E_1}(i),i]=1 \ \forall i$).
  If $P_1^T$ is $k$-monotonically increasing, the matrix $E_1$ has full column
  rank and is in column echelon  form, which implies that 
\begin{eqnarray}
\RRP(E_1HF_1) &=& \sigma_{E_1} (\RRP(HF_1))\notag\\
             &=& \sigma_{E_1}(\RRP(H)),\label{eq:rrpehf}
\end{eqnarray} 
since  $F_1$ has full row rank.
If $P_1^T$ is $\ell$ monotonically increasing, we can write $E_1=
\begin{bmatrix} E_{11}&E_{12}\end{bmatrix}$, where the $m\times (m-\ell)$ matrix
$E_{12}$ is in column echelon form. If $p=n$, the matrix $H$ writes $H=
\begin{bmatrix}  0_{(\ell-k)\times (n-k)}\\H_2 \end{bmatrix}$. Hence we have
$E_1HF_1 = E_{12}H_2F_1$ which also implies
\[
\RRP(E_1HF_1)  = \sigma_{E_1}(\RRP(H)).
\]
  From equation~(\ref{eq:RP:pi}), the row
  support of $\Pi_{P,Q}$ is that of $\Pi_{P_1,Q_1} + E_1\Pi_{P_2,Q_2}F_1$, which is the
  union of the row support of these two terms as they are disjoint. Under the
  conditions of point~\eqref{th:RP:row}, this row support is the union of
  $\RRP(A_1)$ and $\sigma_{E_1}(\RRP(H))$, which is, from~\eqref{eq:rrpehf}
  and~\eqref{eq:rrpa}, $\RRP(A)$.
\item Similarly as for point~(\ref{th:RP:row}).
\item From~(\ref{th:H}) we have still $A_1=B_1$. 
  Now since $\rank(B)=\rank(B_1)=\rank(A_1)=k$, there is no other nonzero element in $\RPM{B}$
  than those in $\RPM{\LA}$ and $\RPM{B}= \RPM{\LA}$.
  The row and column support of  $\RPM{B}$ and that of $E_1HF_1$ are disjoint. Hence 
\begin{equation}\label{eq:RP:RA}
\RPM{A} =    \RPM{\LA} + \RPM{E_1HF_1}.
\end{equation}
If both $P_1^T$ and $Q_1^T$ are $k$-monotonically increasing, the matrix $E_1$ is in
  column echelon form and the matrix $F_1$ in row echelon form. Consequently, the matrix
  $E_1HF_1$ is a copy of the matrix $H$ with $k$ zero-rows and $k$ zero-columns
  interleaved, which does not impact the linear dependency relations between 
  the nonzero rows and columns. As a consequence 
\begin{equation}
\RPM{E_1HF_1} =  E_1\RPM{H}F_1. \label{eq:EHFRPM}
\end{equation}
 Now if $Q_1^T$ is $k$-monotonically increasing, $P_1^T$ is
  $\ell$-mono\-ton\-ically increasing and $p=n$, then, using notations of
  point~\eqref{th:RP:row}, $E_1HF_1 = E_{12}H_2F_1$ where
  $E_{12}$ is in column echelon form. Thus  $\RPM{E_1HF_1} =  E_1\RPM{H}F_1$ for
  the same reason. The symmetric case where $Q_1^T$ is $p$-monotonically
  increasing and $\ell=m$ works similarly.
  Combining equations~(\ref{eq:RP:pi}),~(\ref{eq:RP:RA}) and~(\ref{eq:EHFRPM})  gives 
  $\RPM{A}  = \Pi_{P,Q}$.%
\end{compactenum}%
\end{pf}

\section{Algorithms for the rank profile matrix}\label{sec:algo}
Using Theorem~\ref{th:RPandPerm}, we  deduce what rank profile information is
revealed by a PLUQ algorithm by the way the search and the permutation
operations are done.
Table~\ref{tab:RPRSearchPerm} summarizes these results, and points to instances
known in the literature, implementing the corresponding type of elimination.
More precisely, we first distinguish in this table the ability to compute the
row or column rank profile or the rank profile matrix, but we also indicate
whether the resulting PLUQ decomposition preserves the monotonicity of the rows
or columns. Indeed some algorithm may compute the rank profile matrix,
but break the precedence relation between the linearly dependent rows or
columns,  making it unusable as a base case for a block algorithm of higher level.

\begin{table*}[htb]\small
\begin{center}
\resizebox{\linewidth}{!}{%
\begin{tabular}{llllll}
\toprule
\textbf{Search }&\textbf{Row
  Perm.}&\textbf{Col. Perm.}&\textbf{Reveals}&\textbf{Monotonicity} & \textbf{Instance}\\
\midrule
 Row order  & Transposition & Transposition &\RRP & &\cite{IMH:1982,JPS:2013}\\ 
 Col. order  & Transposition & Transposition &\CRP & & \cite{KG:1985,JPS:2013}\\ 
\midrule
  \multirow{3}{*}{Lexico.} & Transposition & Transposition & \RRP & &\cite{Storjohann:2000:thesis}\\
  & Transposition &Rotation & \RRP, \CRP, \RPM{}&Col. &here \\
  & Rotation &Rotation & \RRP, \CRP, \RPM{}&Row, Col. &here \\
\midrule
\multirow{3}{*}{Rev. lexico.} & Transposition & Transposition & \CRP & &\cite{Storjohann:2000:thesis}\\
   & Rotation & Transposition & \RRP, \CRP, \RPM{}& Row& here\\
   & Rotation & Rotation & \RRP, \CRP, \RPM{}& Row, Col.& here\\
\midrule
  \multirow{3}{*}{Product}  & Rotation &Transposition &\RRP& Row &here \\
    & Transposition &Rotation &\CRP& Col &here \\
    & Rotation &Rotation &\RRP, \CRP, \RPM{}& Row, Col.&\cite{DPS:2013} \\
\bottomrule
\end{tabular}
}
\caption{Pivoting Strategies revealing rank profiles}\label{tab:RPRSearchPerm}
\end{center}
\end{table*}

\subsection{Iterative algorithms}
We start with iterative algorithms, where each iteration handles one pivot at a
time. Here Theorem~\ref{th:RPandPerm} is applied with $k=1$, and
the partial elimination represents how one pivot is being treated. The
elimination of $H$ is done by induction. 

\subsubsection{Row and Column order Search}

The row order pivot search operation is of the form: 
\textit{any nonzero element in the first nonzero row}. 
Each row is inspected in order, and a new row is considered only
when the previous row is all zeros.  
With the notations of Theorem~\ref{th:RPandPerm}, this means that $A_1$ is the
leading $\ell\times n$ sub-matrix of $A$, where $\ell$ is the index of 
the first nonzero row of $A$.
When permutations $P_1$ and $Q_1$, moving the pivot from
position $(\ell,j)$ to $(k,k)$ are transpositions, 
the matrix $\Pi_{P_1,Q_1}$ is the element $E_{\ell,j}$ of the canonical basis. 
Its row rank profile is $(\ell)$ which is that of the $\ell
\times n$ leading sub-matrix $A_1$. Finally, the permutation $P_1$ is
$\ell$-monotonically increasing, and Theorem~\ref{th:RPandPerm}
case~(\ref{th:RP:row}) can be applied to prove by induction that any such
algorithm will reveal the row rank profile: $\RRP(A)=\RS{\Pi_{P,Q}}$.
The case of the column order search is similar. 
%
\subsubsection{Lexicographic order based pivot search}

In this case the pivot search operation is of the form: 
\textit{first nonzero element in the first nonzero row}. 
The lexicographic order being compatible with the row order, the above results
hold when transpositions are used and the row rank profile is revealed. If in
addition column rotations are used, $Q_1=R_{1,j}$ which is $1$-monotonically
increasing. Now $\Pi_{P_1,Q_1}=E_{\ell,j}$ which is the rank profile matrix of
the $\ell\times n$ leading sub-matrix $A_1$ of $A$. Theorem~\ref{th:RPandPerm}
case~(\ref{th:RP:both}) can be applied to prove by induction that any such
algorithm will reveal the rank profile matrix: $\RPM{A}=\Pi_{P,Q}$. 
Lastly, the use of row rotations, ensures that the order of the linearly
dependent rows will be preserved as well.
Algorithm~\ref{alg:pluq:iter} is an instance of Gaussian elimination with a
lexicographic order search and rotations for row and column permutations.

The case of the reverse lexicographic order search is similar.
As an example, the algorithm in~\cite[Algorithm 2.14]{Storjohann:2000:thesis} is based
on a reverse lexicographic order search but with transpositions for the row
permutations. Hence it only reveals the column rank profile.

\subsubsection{Product order based pivot search}

The search here consists in finding any nonzero element $A_{\ell,p}$ such that
the $\ell\times p$ leading sub-matrix $A_1$ of $A$ is all zeros except this
coefficient. If the row and column permutations are the rotations $R_{1,\ell}$
and $R_{1,p}$, we have
$\Pi_{P_1,Q_1}=E_{\ell,p}=\RPM{A_1}$. Theorem~\ref{th:RPandPerm}
case~(\ref{th:RP:both}) can be applied to prove by induction that any such
algorithm will reveal the rank profile matrix: $\RPM{A}=\Pi_{P,Q}$. 
An instance  of such an algorithm is given in~\cite[Algorithm~2]{DPS:2013}.
If $P_1$ (resp.\ $Q_1$) is a transposition, then 
Theorem~\ref{th:RPandPerm} case~(\ref{th:RP:col}) (resp.\ case~(\ref{th:RP:row}))
applies to show by induction that the columns (resp.\ row) rank profile is revealed.

\subsection{Recursive algorithms}


A recursive Gaussian elimination algorithm can either split one of the row or
column dimension, cutting the matrix in  wide or tall rectangular slabs, or
split both dimensions, leading to a decomposition into tiles.

\subsubsection{Slab recursive algorithms}
Most algorithms computing rank profiles are slab
recursive~\cite{IMH:1982,KG:1985,Storjohann:2000:thesis,JPS:2013}.
When the row dimension is split, this means that the search space for pivots is
the whole set of columns, and Theorem~\ref{th:RPandPerm} applies with
$p=n$. This corresponds to either a row or a lexicographic order.
From case(~\ref{th:RP:row}), one shows that, with transpositions, the algorithm
recovers the row rank profile, provided that the base case does.
If in addition, the elementary column permutations are rotations, then
case~(\ref{th:RP:both}) applies and the rank profile matrix is recovered.
Finally, if rows are also permuted by monotonically increasing permutations,
then the PLUQ decomposition also respects the monotonicity of the linearly
dependent rows and columns.
The same reasoning holds when splitting the column dimension.

\subsubsection{Tile recursive algorithms}

Tile recursive Gaussian elimination
algorithms~\cite{DPS:2013,Malaschonok:2010,DuRo:2002} are more involved, 
especially when dealing with rank deficiencies. Algorithm~\ref{alg:pluq:4rec}
recalls the tile recursive algorithm that the authors proposed
in~\cite{DPS:2013}.

{
\begin{algorithm}[htbp]
  \caption{\pluq}\label{alg:pluq:4rec}
\begin{algorithmic}
\Require{$A=(a_{ij})$ a $m\times n$ matrix over a field}
\Ensure{$P, Q$:  $m\times m$ and $n\times n$ permutation matrices, $r$, the rank of $A$}
\Ensure{$A \leftarrow
\begin{smatrix}
    L \backslash U & V\\
    M              &0
  \end{smatrix}$ where 
$L$ and $U$ are $r\times r$ resp. unit lower and upper triangular, 
and $A= P \begin{smatrix} L\\M \end{smatrix} \begin{smatrix} U&V \end{smatrix} Q.$
}
\If{$\min(m,n)\leq \text{Threshold}$}
   \State Call a base case implementation
\EndIf
\State Split $A=
    \begin{smatrix}
      A_{1}&A_{2}\\
      A_{3}&A_{4}\\
    \end{smatrix}$ where $A_{1}$ is $\lfloor\frac{m}{2}\rfloor\times \lfloor\frac{n}{2}\rfloor$.
  \State Decompose $A_{1} = P_1
  \begin{smatrix}L_1\\M_1\end{smatrix}\begin{smatrix}U_1&V_1\end{smatrix}
  Q_1$ \Comment{$\pluq(A_{1})$}
  \State $\begin{smatrix} B_1\\B_2\end{smatrix}
\leftarrow P_1^TA_{2}$ \Comment{$\texttt{PermR}(A_2,P_1^T)$}
  \State $
  \begin{smatrix}
    C_1&C_2
  \end{smatrix}
 \leftarrow A_{3}Q_1^T$ \Comment{$\texttt{PermC}(A_3,Q_1^T)$}
  \State Here $A =
\begin{smatrix}
      L_1 \backslash U_1& V_1& B_1\\
      M_1               & 0  & B_2\\
      C_1               & C_2& A_{4}\\
\end{smatrix}
$.
   \State $D\leftarrow L_1^{-1}B_1$ \Comment{$\trsm(L_1,B_1)$}
   \State $E\leftarrow C_1U_1^{-1}$ \Comment{$\trsm(C_1,U_1)$}
   \State $F\leftarrow B_2-M_1D$ \Comment{$\MM(B_2,M_1,D)$}
   \State $G\leftarrow C_2-EV_1$ \Comment{$\MM(C_2,E,V_1)$}
   \State $H\leftarrow A_4-ED$ \Comment{$\MM(A_4,E,D)$}
   \State Here $A=
   \begin{smatrix}
     L_1 \backslash U_1& V_1& D\\
       M_1               & 0  & F\\
       \hline
       E               & G& H\\
   \end{smatrix}
$.
   
   \State Decompose $F =
   P_2\begin{smatrix}L_2\\M_2\end{smatrix}\begin{smatrix}U_2&V_2\end{smatrix}
   Q_2$ \Comment{$\pluq(F)$}
   \State Decompose $G =
   P_3\begin{smatrix}L_3\\M_3\end{smatrix}\begin{smatrix}U_3&V_3\end{smatrix}
   Q_3$ \Comment{$\pluq(G)$}
   \State $ \begin{smatrix} H_1&H_2\\H_3&H_4 \end{smatrix}
\leftarrow P_3^THQ_2^T$ \Comment{$\texttt{PermR}(H,P_3^T);\texttt{PermC}(H,Q_2^T)$}
   \State $\begin{smatrix} E_1\\E_2 \end{smatrix} \leftarrow  P_3^T E$ \Comment{$\texttt{PermR}(E,P_3^T)$}
   \State $\begin{smatrix} M_{11}\\M_{12} \end{smatrix} \leftarrow  P_2^T M_1$ \Comment{$\texttt{PermR}(M_1,P_2^T)$}
   \State $\begin{smatrix} D_1&D_2\end{smatrix} \leftarrow  DQ_2^T$ \Comment{$\texttt{PermR}(D,Q_2^T)$}
   \State $\begin{smatrix} V_{11}&V_{12}\end{smatrix} \leftarrow  V_1Q_3^T$ \Comment{$\texttt{PermR}(V_1,Q_3^T)$}

\State Here $A=
\begin{smatrix}
      L_1 \backslash U_1& V_{11}&V_{12}& D_1 &D_2\\
      M_{11}               & 0  &0& L_2\backslash U_2 & V_2\\
      M_{12}               & 0  &0& M_2 & 0\\
      E_1               & L_3\backslash U_3&V_3& H_1 &H_2\\
      E_2               & M_3              &0  & H_3 &H_4\\
\end{smatrix}
$.
   \State $I\leftarrow H_1U_2^{-1}$ \Comment{$\trsm(H_1,U_2)$}
   \State $J\leftarrow L_3^{-1}I$ \Comment{$\trsm(L_3,I)$}
    \State $K\leftarrow H_3U_2^{-1}$ \Comment{$\trsm(H_3,U_2)$}
    \State $N\leftarrow L_3^{-1}H_2$ \Comment{$\trsm(L_3,H_2)$}
    \State    $O\leftarrow N-JV_2$ \Comment{$\MM(N,J,V_2)$}
    \State $R\leftarrow H_4-KV_2-M_3O$ \Comment{$\MM(H_4,K,V_2);\MM(H_4,M_3,O)$}
    \State  Decompose $ R =
    P_4\begin{smatrix}L_4\\M_4\end{smatrix}\begin{smatrix}U_4&V_4\end{smatrix}
    Q_4$ \Comment{$\pluq(R)$}
    \State $ \begin{smatrix}
      E_{21}& M_{31} & 0 & K_1\\
      E_{22}& M_{32} & 0 & K_2\\
    \end{smatrix} 
    \leftarrow  P_4^T  \begin{smatrix} E_{2}& M_{3} & 0 & K\\ \end{smatrix}$ \Comment{$\texttt{PermR}$}
    \State $\begin{smatrix} D_{21}&D_{22}\\ V_{21}&V_{22}\\ 0&0 \\ O_1&O_2 \end{smatrix}
            \leftarrow\begin{smatrix} D_{2}\\ V_{2}\\ 0 \\ O\end{smatrix} Q_4^T$ \Comment{$\texttt{PermC}$}
\breakalgorithm{}
    \State Here $A=
    \begin{smatrix}
      L_1 \backslash U_1& V_{11}&V_{12}& D_1 &D_{21}&D_{22}\\
      M_{11}               & 0  &0& L_2\backslash U_2 & V_{21}&V_{22}\\
      M_{12}               & 0  &0& M_2 & 0&0\\
      E_1               & L_3\backslash U_3&V_3& I &O_1 &O_2\\
      E_{21}               & M_{31}              &0  & K_1 & L_4\backslash U_4 &V_4\\
      E_{22}               & M_{32}              &0  & K_2 & M_4 &0\\
    \end{smatrix}
$.
\State $S\leftarrow \begin{smatrix}
  I_{r_1+r_2}\\
  &&I_{k-r_1-r_2}\\
  &I_{r_3+r_4}\\
  &&&&I_{m-k-r_3-r_4}
\end{smatrix}$
\State $T\leftarrow \begin{smatrix}
  I_{r_1}\\
        &      &          &I_{r_2} & \\
        &I_{r_3}&          &       & \\
        &      &          &       & I_{r_4}\\
        &      &I_{k-r_1-r_3}\\
        &      &          &       &       & I_{n-k-r_2-r_4}\\
\end{smatrix}$

 \State $P\leftarrow \text{Diag}(
   P_1
   \begin{smatrix}
     I_{r_1}\\&P_2
   \end{smatrix},P_3
   \begin{smatrix}
     I_{r_3}\\&P_4
   \end{smatrix} ) S$
 \State $Q\leftarrow
T\text{Diag}(\begin{smatrix}
     I_{r_1}\\&Q_3
   \end{smatrix}
 Q_1,\begin{smatrix}
     I_{r_2}\\&Q_4
   \end{smatrix}
 Q_2)$

 
\State $A\leftarrow S^TAT^T$ \Comment{$\texttt{PermR}(A,S^T); \texttt{PermC}(A,T^T)$}
\State Here $A=
\begin{smatrix}
  L_1\backslash U_1 &D_1              &V_{11} &D_{21} & V_{12}&D_{22}\\
    M_{11}          &L_2\backslash U_2 & 0 &       V_{21} & 0 & V_{22}\\ 
     E_1           &       I         &L_3\backslash U_3 & O_1 & V_3&O_2 \\
     E_{21}         &K_1              & M_{31}            & L_4\backslash U_4     &0&V_4 \\
     M_{12}         & M_2             & 0 &0 &0& 0\\
     E_{22}         & K_2             & M_{32}           & M_4 & 0 &0\\
\end{smatrix}$
\Return $(P,Q,r_1+r_2+r_3+r_4,A)$
\end{algorithmic}
\end{algorithm}
}
Here, the search area $A_1$ has arbitrary dimensions $\ell\times p$, often
specialized as $m/2 \times n/2$. As a consequence, the pivot search can not
satisfy either a row, a column, a lexicographic or a reverse lexicographic order.
Now, if the pivots selected in the elimination of  $A_1$ minimize the product
order, then they necessarily also respect this order as pivots of the whole
matrix $A$. Now, from~(\ref{th:H}), the remaining matrix $H$ writes
$H=\begin{smatrix}  0_{(\ell-k)\times(p-k)} & H_{12}\\H_{21} &
  H_{22}\end{smatrix}$ and its elimination is done by two independent eliminations
on the blocks $H_{12}$ and $H_{21}$, followed by some update of $H_{22}$ and a
last elimination on it. Here again, pivots minimizing the row order on $H_{21}$
and $H_{12}$ are also pivots minimizing this order for $H$, and so are those of
the fourth elimination. Now the block row and column permutations used
in~\cite[Algorithm~1]{DPS:2013} to form the PLUQ decomposition are
$r$-monotonically increasing. Hence, from case~(\ref{th:RP:both}), the algorithm
computes the rank profile matrix and preserves the monotonicity.
If only one of the row or column permutations are rotations, then
case~(\ref{th:RP:row}) or~(\ref{th:RP:col}) applies to show that either the row or
the column rank profile is computed.


\section{Improvements in practice}
\label{sec:basecase}
In~\cite{DPS:2013}, we identified the ability to
recover the rank profile matrix via the use of the product order search and
of rotations.
Hence we proposed an implementation combining a tile recursive algorithm and an iterative
base case, using these search and permutation strategies.

The analysis of sections~\ref{sec:cond} and~\ref{sec:algo} shows that other
pivoting strategies can be used to compute the rank profile matrix, and preserve
the monotonicity. We present here a new base case algorithm and its
implementation over a finite field that we wrote in the \fflasffpack
library\footnote{FFLAS-FFPACK Git rev. d420b4b,
  \url{http://linbox-team.github.io/fflas-ffpack/}, linked against OpenBLAS-v0.2.9.}. 
It is based on a
lexicographic order search and row
and column rotations. Moreover, the schedule of the update operations is that
of a Crout elimination, for it reduces the number of modular reductions, as
shown in~\cite[\S~3.1]{DPS:2014}.
Algorithm~\ref{alg:pluq:iter} summarizes this variant.
\begin{algorithm}[htbp]
  \caption{Crout variant of \pluq with lexicographic search and column rotations}
  \label{alg:pluq:iter}
\begin{algorithmic}[1]
\State $k\leftarrow 1$
\For{$i=1\dots m$}
   \State $A_{i,k..n}\leftarrow A_{i,k..n} - A_{i,1..k-1}\times A_{1..k-1,k..n}$
   \If{$A_{i,k..n} = 0$}
     \State Loop to next iteration
   \EndIf
   \State Let $A_{i,s}$ be the left-most nonzero element of row $i$.
   \State $A_{i+1..m,s}\leftarrow A_{i+1..m,s} - A_{i+1..m,1..k-1}\times A_{1..k-1,s}$
   \State $A_{i+1..m,s}\leftarrow A_{i+1..m,s} / A_{i,s}$
   \State Bring $A_{*,s}$ to $A_{*,k}$ by column rotation
   \State Bring $A_{i,*}$ to $A_{k,*}$ by row rotation
   \State $k\leftarrow k+1$
\EndFor
\end{algorithmic}
\end{algorithm}

\begin{figure}[ht]
  \centering
  \includegraphics[width=.8\columnwidth]{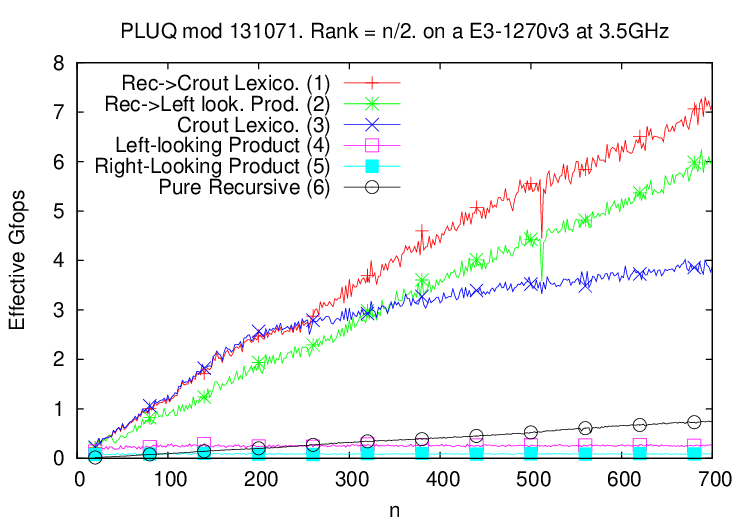}\\
  \includegraphics[width=.8\columnwidth]{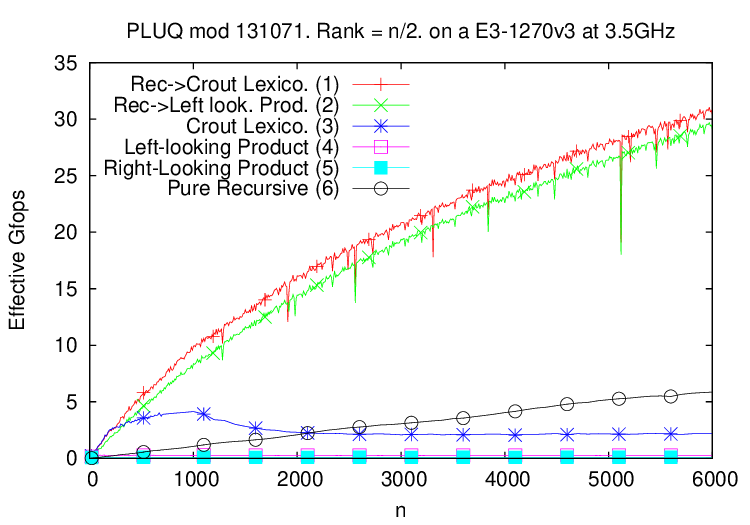} 
  \caption{Computation speed of PLUQ decomposition base cases.}
  \label{fig:basecase}
\end{figure}
In the following experiments, we report the median of the real time for up to 60
computations (depending on the size of the instance), with $n\times n$ matrices
of rank $r=n/2$ (left) or $r=n/8$ (right).
In order to ensure that the row and column rank profiles of these
matrices are random, we construct them as the product $A=L\RPM{}U$, where
$L$ and $U$ are random nonsingular lower and upper triangular matrices and $\RPM{}$ is
an $m\times n$ $r$-sub-permutation matrix whose nonzero elements positions are chosen
uniformly at random.
The effective speed is obtained by dividing an estimate of the arithmetic cost
($2mnr+2/3r^3-r^2(m+n)$) by the computation time. 

Figure~\ref{fig:basecase} shows the computation speed of Algorithm~\ref{alg:pluq:iter} (3),
compared to that of the pure recursive algorithm (6), and to our
previous base case algorithm in~\cite{DPS:2013}, using a product order search, and either a
left-looking (4) or a right-looking (5) schedule. At $n=200$, the left-looking
variant (4) improves over the right looking variant (5) by a factor of about
$3.18$ as it performs fewer modular reductions. Then, the Crout variant (3)
again improves variant (4) by a factor of about $9.29$. Lastly we also show the speed
of the final implementation, formed by the tile recursive algorithm cascading to either the
Crout base case (1) or the left-looking one (2). The threshold where the
cascading to the base case occurs is experimentally set to its optimum value,
i.e. 256 for variant (1) and 40 for variant (2). This illustrates that the
gain on the base case efficiency leads to a higher threshold, and
 improves the efficiency of the cascade implementation (by an additive
gain of about 1 effective Gfops in the range of dimensions considered).

Note that the experiments reported here differ from that
of~\cite{DPS:ELU:2015}. We found a mistake in the code generating the random
matrices, making their row rank profile generic, which led to a reduced
amount of row permutations and therefore a much higher computation speed. After
fixing this issue, we noticed a slow-down of most base case implementations, but
all variants still compare in the same way. The spikes in curves (1) and (2)
showing drops in computation speed in a few points are reproducible, and seem to
correspond to pathological dimensions, for which the blocking generates worst
case cache misses scenari.

\section{Relations with other triangularizations}
\label{sec:relations}
\subsection{The LEU decomposition}\label{ssec:LEU}

The LEU decomposition of~\cite{Malaschonok:2010} involves a lower
triangular matrix $L$, an upper triangular matrix $U$ and a $r$-sub-permutation
matrix $E$. It is also called the {\em modified Bruhat decomposition}
in~\cite{Tyrtyshnikov:1997:Bruhat}.
Theorem~\ref{thm:leu} shows how to recover an LEU decomposition from
a PLUQ decomposition revealing the rank profile matrix.

\begin{thm}\label{thm:leu}
  Let $A=PLUQ$ be a PLUQ decomposition revealing the rank profile matrix
  ($\Pi_{P,Q}=\RPM{A}$). Then an LEU decomposition of $A$ with $E=\RPM{A}$
  is obtained as follows (only using row and column permutations):
\begin{equation}\label{eq:LEU}
 A = \underbrace{P\begin{bmatrix} L&0_{m\times (m-r)}\end{bmatrix}P^T}_{\overline{L}} 
 \underbrace{P\begin{bmatrix}I_r\\&0\end{bmatrix}Q}_{E} 
 \underbrace{Q^T\begin{bmatrix} U\\0_{(n-r)\times n}\end{bmatrix}Q}_{\overline{U}}
\end{equation}
 \end{thm}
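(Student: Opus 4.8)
The plan is to verify, by direct matrix computation, that the three factors on the right-hand side of~\eqref{eq:LEU} multiply back to $A$, and that each factor has the claimed shape. First I would check that $\overline{L} = P\begin{bmatrix}L & 0_{m\times(m-r)}\end{bmatrix}P^T$ is genuinely lower triangular: since $A=PLUQ$ reveals the rank profile matrix, the permutation $P$ is of the special form coming from an elimination whose pivots sit at the positions of $\RPM{A}$; the point is that conjugating the $m\times r$ lower-trapezoidal matrix $\begin{bmatrix}L\\ \ast\end{bmatrix}$ (padded by zero columns to an $m\times m$ matrix) by such a $P$ keeps it lower triangular. Symmetrically, $\overline{U} = Q^T\begin{bmatrix}U\\ 0\end{bmatrix}Q$ is upper triangular. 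The third factor is $E = P\begin{bmatrix}I_r\\ & 0\end{bmatrix}Q = \Pi_{P,Q} = \RPM{A}$ by Definition~\ref{def:PivMat} and the hypothesis.

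\medskip\noindent The key algebraic step is the telescoping identity
\begin{equation*}
\overline{L}\,E\,\overline{U}
= P\begin{bmatrix}L & 0\end{bmatrix}P^T \cdot P\begin{bmatrix}I_r\\&0\end{bmatrix}Q \cdot Q^T\begin{bmatrix}U\\0\end{bmatrix}Q
= P\begin{bmatrix}L & 0\end{bmatrix}\begin{bmatrix}I_r\\&0\end{bmatrix}\begin{bmatrix}U\\0\end{bmatrix}Q,
\end{equation*}
where the inner $P^TP=I_m$ and $QQ^T=I_n$ cancel. Then one checks that the middle product of the three rectangular matrices collapses: $\begin{bmatrix}L & 0_{m\times(m-r)}\end{bmatrix}\begin{bmatrix}I_r\\ & 0_{(m-r)\times(n-r)}\end{bmatrix} = \begin{bmatrix}L_r & 0\end{bmatrix}$ where $L_r$ is the first $r$ columns of $L$ (an $m\times r$ block), and multiplying this on the right by $\begin{bmatrix}U\\ 0_{(n-r)\times n}\end{bmatrix}$ gives $L_r U_r$ where $U_r$ is the first $r$ rows of $U$. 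Since in the PLUQ decomposition $L$ has exactly $r$ nonzero columns and $U$ exactly $r$ nonzero rows (the factorization has rank $r$, so $L=\begin{bmatrix}L_r & 0\end{bmatrix}$ and $U = \begin{bmatrix}U_r\\ 0\end{bmatrix}$ after the standard padding), we get $L_r U_r = LU$, hence $\overline{L}\,E\,\overline{U} = PLUQ = A$.

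\medskip\noindent The main obstacle is the triangularity of $\overline{L}$ and $\overline{U}$: this is where the hypothesis $\Pi_{P,Q}=\RPM{A}$ is essential and cannot be weakened to an arbitrary PLUQ decomposition. I would argue it as follows. Write $\sigma=\sigma(P)$. The nonzero entries of $L$ below the diagonal in column $j$ correspond, after applying $P$ on the left, to rows $\sigma(i)$ with $i>j$; conjugating by $P^T$ on the right sends column $j$ to column $\sigma(j)$. So $\overline{L}$ has a possible nonzero entry at $(\sigma(i),\sigma(j))$ only when $i\geq j$, and one must show $\sigma(i)>\sigma(j)$ (or $i=j$) in that range. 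This monotonicity of $\sigma$ on the relevant index set is exactly the property guaranteed when the pivots are placed at the rank-profile-matrix positions — equivalently, it is the statement that the row and column echelon structures are respected, as developed in Section~\ref{sec:WhenPLUQ} via the monotonically increasing permutations. The symmetric statement handles $\overline{U}$. Once triangularity is in place, the decomposition is visibly of the form $L'EU'$ with $E=\RPM{A}$ an $r$-sub-permutation matrix, completing the proof.
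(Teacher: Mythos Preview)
Your telescoping verification that $\overline{L}\,E\,\overline{U}=A$ and the identification $E=\Pi_{P,Q}=\RPM{A}$ are fine and match the paper. The gap is in your argument for the triangularity of $\overline{L}$ (and symmetrically $\overline{U}$). You reduce it to the claim that $i\geq j$ implies $\sigma(i)\geq\sigma(j)$, i.e.\ that $\sigma=\sigma(P)$ is monotone on the relevant index set, and you justify this by appealing to the monotonically increasing permutations of Section~\ref{sec:WhenPLUQ}. But the hypothesis of the theorem is only $\Pi_{P,Q}=\RPM{A}$; it does \emph{not} assume that the decomposition was produced by an algorithm using monotonically increasing permutations. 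Those permutations are a \emph{sufficient} condition for an elimination to reveal $\RPM{A}$, not a consequence of revealing it. In general $\sigma(P)$ need not be monotone: for instance, pivots discovered in the order $(1,1),(3,2),(2,3)$ give $\sigma_P=(1,3,2)$, which violates your monotonicity claim at $i=3,j=2$, yet $\Pi_{P,Q}=\RPM{A}$ holds and $\overline{L}$ is still lower triangular because $L_{3,2}=0$. So what must be shown is the weaker implication ``$L_{i,j}\neq 0$ and $j\leq r$ $\Rightarrow$ $\sigma(i)\geq\sigma(j)$'', and this depends on the actual entries of $L$, not only on $\sigma$.

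The paper proves this by contradiction using the defining property of the rank profile matrix directly, rather than any monotonicity of $P$. Assuming $\overline{L}_{i,j}\neq 0$ with $i<j$, it first argues that both $i$ and $j$ lie in $\RRP(A)$, then lets $(a,b)$ be the preimages in $L$ and $(i,s),(j,t)$ the corresponding pivot positions in $\RPM{A}$. Looking at the leading sub-matrix $A_{1..\ell,1..p}$ just large enough to contain the first $a-1$ pivots, one sees that the pivot $(j,t)$ lies inside it while $(i,s)$ does not; since $i<j$ this forces $s>t$. But then at the step where $(j,t)$ is selected, row $i$ cannot yet carry a nonzero in column $t$ (because $(i,s)$ with $s>t$ is a pivot of $\RPM{A}$), contradicting $\overline{L}_{i,j}\neq 0$. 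This is the argument you are missing; your appeal to Section~\ref{sec:WhenPLUQ} does not substitute for it.
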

 \begin{pf}
First $E=P\begin{smatrix}I_r\\&0\end{smatrix}Q = \Pi_{P,Q} = \RPM{A}$.
Then there only needs to show that $\overline{L}$ is lower triangular and
   $\overline{U}$ is upper triangular.
Suppose that $\overline{L}$ is not lower triangular, let $i$ be the first row
index such that $\overline{L}_{i,j}\neq0$ for some $i<j$. 
First $j\in\RRP(A)$ since the nonzero columns in $\overline{L}$ are placed
according to the first $r$ values of $P$.
Remarking that $A = P \begin{smatrix} L&0_{m\times (n-r)}\end{smatrix}
\begin{smatrix}  &{U}\\ 0 & I_{n-r} \end{smatrix} Q$, and since
right multiplication by a nonsingular matrix does not change row rank profiles,
we deduce that
$\RRP(\Pi_{P,Q})=\RRP(A)=\RRP(\overline{L})$.
If $i\notin\RRP(A)$, then the $i$-th row of $\overline{L}$ is linearly
dependent with the previous rows, but none of them has a nonzero element in
column $j>i$. Hence $i\in\RRP(A)$.

Let $(a,b)$ be the position of the coefficient $\overline{L}_{i,j}$ in $L$, that
is $a=\sigma_P^{-1}(i), b=\sigma_P^{-1}(j)$. Let also $s=\sigma_Q(a)$ and
$t=\sigma_Q(b)$ so that the pivots at diagonal position $a$ and $b$ in $L$
respectively correspond to ones in $\RPM{A}$ at positions $(i,s)$ and $(j,t)$.
Consider the $\ell\times p$ leading sub-matrices $A_1$ of $A$ where
$\ell=\max_{x=1..a-1}(\sigma_P(x))$ and
$p=\max_{x=1..a-1}(\sigma_Q(x))$.
On one hand $(j,t)$ is an index position in $A_1$ but not $(i,s)$, since
otherwise $\rank(A_1) = b$.
Therefore, $(i,s) \nprec_{prod} (j,t)$, and $s>t$ as $i<j$.
As coefficients $(j,t)$ and $(i,s)$ are pivots in $\RPM{A}$ and $i<j$ and $t<s$,
there can not be a nonzero element above $(j,t)$ at row $i$ when it is chosen
as a pivot. Hence $\overline{L}_{i,j}=0$ and $\overline{L}$ is lower triangular.
The same reasoning applies to show that $\overline{U}$ is upper triangular.
 \end{pf}

\begin{rem}\label{rem:cexleuuniq} 
Note that the LEU decomposition with $E=\RPM{A}$ is not unique, even for
invertible matrices. As a counter-example, the following decomposition holds for
any $a\in\K$:
\begin{equation}
\left[\begin{matrix} 0 & 1 \\ 1 & 0 \end{matrix}\right]
=
\left[\begin{matrix} 1 & 0 \\ a & 1 \end{matrix}\right]
\left[\begin{matrix} 0 & 1 \\ 1 & 0 \end{matrix}\right]
\left[\begin{matrix} 1 & -a \\ 0 & 1 \end{matrix}\right]
\end{equation}

\end{rem}
However, uniqueness of the full rank LEU decomposition can be achieved by imposing an
additional constraint on the $L$ matrix: there is a unique LEU decomposition
where $L$ is such that $E^T L E$ is also lower triangular~\cite[Th. 2.1]{Manthey:2007:Bruhat}.
We remark that for the LEU decomposition proposed in~\eqref{eq:LEU}, this
condition becomes: 
\begin{equation}
  \label{eq:uniqueLEU}
\text{the matrix}~E^T \bar L E = Q_{*,1..r}^T L_{1..r,1..r} Q_{*,1..r}~\text{must be lower triangular.}
\end{equation}
Whenever $Q_{*,1..r}$ is in column echelon form, which means that the sequence
of the pivot's column positions is monotonically increasing, then the above
matrix is lower triangular. A sufficient condition to ensure it, is
thus that the pivoting strategy of the corresponding PLUQ
decomposition  minimizes the reverse lexicographic order.

\begin{cor}
  Let $A=PLUQ$ be a full-rank PLUQ decomposition computed with a pivoting strategy
  minimizing the reverse lexicographic order and performing row rotations. Then
  the LEU decomposition of equation~\eqref{eq:LEU} is the unique
  modified Bruhat decomposition as defined
  in~\cite[Theorem~2.1]{Manthey:2007:Bruhat}. 
\end{cor}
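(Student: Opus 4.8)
The plan is to obtain the statement by combining Theorem~\ref{thm:leu} with the uniqueness criterion recalled just after~\eqref{eq:uniqueLEU}, namely~\cite[Theorem~2.1]{Manthey:2007:Bruhat}; the only substantial point is to check that the prescribed pivoting strategy forces the pivot columns to be encountered in increasing order. First I would observe that a pivot search minimizing the reverse lexicographic order together with row rotations is exactly one of the combinations of Table~\ref{tab:RPRSearchPerm} analysed in Section~\ref{sec:algo} that reveal the rank profile matrix, irrespective of the column permutation used; hence $\Pi_{P,Q}=\RPM{A}$ and Theorem~\ref{thm:leu} applies verbatim. It produces the factorisation~\eqref{eq:LEU}, which is an LEU decomposition of $A$ with $E=\RPM{A}$, i.e.\ a modified Bruhat decomposition in the sense of~\cite{Tyrtyshnikov:1997:Bruhat,Manthey:2007:Bruhat}.

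Next, by~\cite[Theorem~2.1]{Manthey:2007:Bruhat} such a decomposition is \emph{the} unique modified Bruhat decomposition of $A$ as soon as $E^{T}\overline{L}E$ is lower triangular, and by~\eqref{eq:uniqueLEU} this amounts to $Q_{*,1..r}^{T}L_{1..r,1..r}Q_{*,1..r}$ being lower triangular, which — as noted right after~\eqref{eq:uniqueLEU} — holds whenever $Q_{*,1..r}$ is in column echelon form, that is, whenever the column positions $c_{1},\dots,c_{r}$ of the pivots in $A$ form a strictly increasing sequence. So the whole argument reduces to proving $c_{1}<\dots<c_{r}$.

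I would prove this by induction on the pivot index, tracking the leftmost non-zero column of the successively deflated working submatrices. When the $k$-th pivot is selected, the reverse lexicographic search picks the topmost non-zero entry of the leftmost non-zero column of the current submatrix; let $c_{k}$ be its index in $A$. Every column of $A$-index smaller than $c_{k}$ still present in the working submatrix is identically zero, so in particular the pivot row vanishes to the left of the pivot; hence the rank-one update creates no non-zero entry in those columns, and the pivot permutations (row rotations, and either rotations or transpositions on the columns) merely move the pivot column $c_{k}$ out of the working area while only shuffling zero columns and preserving the relative order of the surviving non-zero columns. Consequently the leftmost non-zero column of the next working submatrix has $A$-index strictly greater than $c_{k}$, i.e.\ $c_{k+1}>c_{k}$; this yields $c_{1}<\dots<c_{r}$, hence the column echelon shape of $Q_{*,1..r}$ and, with it, the conclusion.

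The main obstacle will be the bookkeeping in this induction, i.e.\ maintaining the invariant that at every step all columns of the working submatrix lying to the left of the current pivot column are zero. This invariant is exactly what guarantees both that the pivot row is zero to the left (so the update leaves those columns untouched) and that deflating the pivot column cannot expose a pivot in an earlier column; one has to be slightly careful here because, on the column side, the permutations are allowed to be plain transpositions, but swapping the pivot column with the leftmost column of the current block only interchanges it with a zero column, so the invariant survives. Everything else is a direct appeal to Theorem~\ref{thm:leu}, to~\eqref{eq:uniqueLEU}, and to~\cite[Theorem~2.1]{Manthey:2007:Bruhat}.
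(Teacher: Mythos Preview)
Your proposal is correct and follows essentially the same route as the paper. The paper does not give a separate proof for this corollary: the argument is the paragraph immediately preceding it, which observes that the uniqueness criterion of~\cite[Theorem~2.1]{Manthey:2007:Bruhat} reduces via~\eqref{eq:uniqueLEU} to $Q_{*,1..r}$ being in column echelon form, and then simply asserts that a reverse-lexicographic pivot search guarantees this. You reproduce exactly this reduction and add a careful inductive justification of the step the paper treats as evident, namely that under reverse-lexicographic search the pivot column indices $c_1<\dots<c_r$ are strictly increasing; your bookkeeping about zero columns surviving the update and the column permutation (transposition or rotation) is the right way to make this precise.
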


\subsection{The Bruhat decomposition}
The Bruhat decomposition, from which  Malascho\-nok's LEU
decomposition was inspired in~\cite{Malaschonok:2010}, is another decomposition with a
central permutation 
matrix, see~\cite{Bruhat:1956:Lie,delladora:1973:these,Bourbaki:2008:lie,Grigoriev:1981:bruhat}.
Bruhat's theorem is stated in terms of Weyl groups but reduces to the following:
\begin{thm}[\cite{Bourbaki:2008:lie}]
Any invertible matrix $A$ can be written as $A=V P U$ for $V$ and $U$ uppper triangular invertible matrices and $P$ a permutation matrix. The latter decomposition is called the {\em Bruhat decomposition} of $A$.
\end{thm}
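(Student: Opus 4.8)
The plan is to obtain the Bruhat decomposition of an invertible $A$ from the LEU decomposition of Theorem~\ref{thm:leu}, applied not to $A$ itself but to a reflected copy of it. Let $J$ denote the $n\times n$ reversal permutation matrix, $J_{i,j}=1$ iff $i+j=n+1$. Two elementary facts about $J$ will be used: it is an involution ($J^2=I_n$), and conjugation by $J$ exchanges lower and upper triangular matrices, since for a lower triangular $T$ one has $(JTJ)_{i,j}=T_{n+1-i,\,n+1-j}$, which vanishes unless $j\geq i$ (and symmetrically).

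First I would apply to the matrix $JA$ --- invertible because $A$ and $J$ are --- any \pluq algorithm that reveals the rank profile matrix, for instance Algorithm~\ref{alg:pluq:4rec} or Algorithm~\ref{alg:pluq:iter}, and then feed the resulting factorization to Theorem~\ref{thm:leu}. This produces an LEU decomposition $JA=\overline{L}\,E\,\overline{U}$ in which $\overline{L}$ is lower triangular, $\overline{U}$ is upper triangular, and $E=\RPM{JA}$. Since $JA$ is invertible its rank is $n$, so $\overline{L}$ and $\overline{U}$ are invertible and, by Lemma~\ref{lem:rpm:prop}(2), $E$ is a full permutation matrix. It then remains to transport this decomposition back through $J$: using $J^2=I_n$ together with the identity $J\overline{L}=(J\overline{L}J)\,J$, one writes
\[
A = J(JA) = J\,\overline{L}\,E\,\overline{U} = (J\overline{L}J)\,(JE)\,\overline{U}.
\]
Here $V:=J\overline{L}J$ is invertible upper triangular by the remark above, $P:=JE$ is a permutation matrix as a product of two permutation matrices, and $U:=\overline{U}$ is invertible upper triangular; hence $A=VPU$ as required.

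Essentially all of the mathematical substance is imported from Theorem~\ref{thm:leu}, so the genuinely delicate point does not occur in the argument above but upstream of it: one needs the left factor $\overline{L}$ of the LEU decomposition to be \emph{actually} triangular rather than merely a permuted conjugate $PLP^{T}$, and this is exactly what Theorem~\ref{thm:leu} guarantees \emph{because} the underlying \pluq decomposition reveals the rank profile matrix --- absent that property the reflection trick would break down. I expect this to be the only real obstacle, and it is already discharged. As a self-contained alternative one can instead argue directly by a restricted Gaussian elimination: process the columns of $A$ from left to right, at step $k$ choosing as pivot the \emph{bottom-most} nonzero entry among the rows not yet used (it exists since the unreduced part stays invertible), clearing the remainder of that column by adding multiples of the pivot row to the rows \emph{above} it (a left multiplication by a unipotent upper triangular matrix) and clearing the pivot row by adding multiples of the pivot column to the columns to its \emph{right} (a right multiplication by a unipotent upper triangular matrix); after $n$ steps and a diagonal scaling what is left is a permutation matrix $P$, so that $V^{-1}AU^{-1}=P$ with $V,U$ invertible upper triangular.
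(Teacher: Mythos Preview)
Your proposal is correct and is essentially the same argument the paper uses: the theorem itself is merely cited from~\cite{Bourbaki:2008:lie} without proof, but the paper's own derivation appears immediately afterward in Corollary~\ref{cor:bruhat}, where it applies the LEU decomposition of Theorem~\ref{thm:leu} to $J_nA$ and writes $A=(J_nLJ_n)(J_nE)U$ --- exactly your reflection trick. Your additional self-contained column-by-column elimination is a nice alternative but goes beyond what the paper provides.
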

It was then naturally extended to singular square matrices
in~\cite{Grigoriev:1981:bruhat} and an algorithm over principal ideals
domains given
in~\cite{Malaschonok:2013:CASC}.  
Corollary~\ref{cor:bruhat} generalizes it to matrices with arbitrary dimensions,
and relates it to the PLUQ decomposition.
\begin{cor}\label{cor:bruhat}
  Any $m\times n$ matrix of rank $r$ has a $VPU$ decomposition, where $V$ and $U$ are upper
  triangular matrices, and $P$ is a $r$-sub-permutation matrix.
\end{cor}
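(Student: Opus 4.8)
The plan is to reduce the Bruhat decomposition $A = VPU$ to the PLUQ decomposition together with the LEU-style rewriting already used in Theorem~\ref{thm:leu}. First I would take any PLUQ decomposition $A = PLUQ$ of the $m\times n$ matrix $A$ of rank $r$, where $P,Q$ are permutation matrices, $L$ is $m\times r$ unit lower triangular (in the echelon sense) and $U$ is $r\times n$ upper triangular. The key observation is that a permutation matrix, when conjugated appropriately, turns a lower triangular factor into an upper triangular one: concretely, for the $r\times r$ anti-diagonal (reversal) permutation $J_r$ one has that $J_r L' J_r$ is upper triangular whenever $L'$ is lower triangular. So the strategy is to insert reversal permutations around $L$ and push the remaining permutation matrices $P$, $Q$ outward.

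The main steps, in order, are: (1) write $A = P \begin{bmatrix} L \\ 0 \end{bmatrix} \begin{bmatrix} U & 0 \end{bmatrix}' $-style padded factors as in~\eqref{eq:LEU}, so that $A = \overline{L}\, E\, \overline{U}$ with $E = \RPM{A}$ an $r$-sub-permutation matrix, $\overline{L}$ lower triangular $m\times m$ and $\overline{U}$ upper triangular $n\times n$ (this is exactly Theorem~\ref{thm:leu}, assuming the PLUQ decomposition reveals the rank profile matrix, which by Section~\ref{sec:WhenPLUQ} we may always arrange); (2) choose a permutation matrix $W$ (built from a full reversal on the relevant index range) such that $W \overline{L} W^{-1}$ is upper triangular — here one must be careful that $\overline{L}$ is only lower triangular, not unit, and that reversing the order of rows and columns sends lower triangular to upper triangular; (3) absorb the conjugating permutations: set $V = W\overline{L}W^{-1}$, rewrite $E = W^{-1} P' $ for a suitable sub-permutation matrix $P'$, and similarly handle $\overline{U}$, collecting everything into $A = V \, P \, U$ with $V$, $U$ upper triangular and $P$ an $r$-sub-permutation matrix. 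Finally I would check the degenerate/rectangular cases ($r < \min(m,n)$, $m\neq n$) cause no trouble, since the zero-padding in~\eqref{eq:LEU} is compatible with triangularity.

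I expect the main obstacle to be bookkeeping the permutations so that the central factor genuinely comes out as an $r$-sub-permutation matrix rather than an arbitrary permutation, while simultaneously keeping both outer factors upper triangular. In the classical square invertible Bruhat theorem the central object is a full permutation matrix; here, because $L$ and $U$ have shapes $m\times r$ and $r\times n$, the reversal trick has to be applied on the $r$-dimensional "active" block and then re-embedded, and one must verify that conjugating $\overline{L}$ (which lives in dimension $m$) by a reversal does not destroy triangularity on the inactive rows/columns. A clean way around this is to reverse only the first $r$ coordinates and leave the rest fixed, and to check directly that with $E=\RPM{A}$ the product of $E$ with these partial reversals is again an $r$-sub-permutation matrix — which follows from the Lemma characterizing $r$-sub-permutation matrices as $P\begin{smatrix} I_r \\ & 0\end{smatrix}Q$. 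The triangularity of $V$ and $U$ then reduces to the same echelon-form argument already carried out in the proof of Theorem~\ref{thm:leu}, so no genuinely new estimate is needed; it is purely a matter of assembling the permutations correctly.
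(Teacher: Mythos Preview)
Your plan has a genuine gap in step~(3). Starting from $A=\overline{L}\,E\,\overline{U}$ and inserting $W^{-1}W$ gives
\[
A=\overline{L}\,W^{-1}\,W\,E\,\overline{U}=(W^{-1})\,(W\overline{L}W^{-1})\,(WE)\,\overline{U},
\]
so if you set $V=W\overline{L}W^{-1}$ and $P'=WE$ you obtain $WA=V\,P'\,\overline{U}$, a decomposition of $WA$, not of $A$. There is no way to absorb the leftover $W^{-1}$ on the left into $V$ while keeping $V$ upper triangular (for a permutation $W$ this would force $W=I$). Your fallback of reversing only the first $r$ coordinates does not help either: conjugating a general $m\times m$ lower triangular matrix by a partial reversal does not produce an upper triangular matrix, regardless of the zero-column structure of $\overline{L}$.

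The paper avoids this by applying the reversal \emph{before} decomposing rather than after: with $J_m$ the $m\times m$ anti-diagonal, take an LEU decomposition of $J_mA$, say $J_mA=L\,E\,U$; then
\[
A=J_mL\,E\,U=(J_mLJ_m)\,(J_mE)\,U,
\]
and now $V=J_mLJ_m$ is genuinely upper triangular, $P=J_mE$ is an $r$-sub-permutation matrix, and $U$ is untouched. No stray permutation remains on the outside. This also dissolves your worry about ``inactive rows/columns'': the full reversal $J_m$ works directly on the whole $m\times m$ factor, and no partial reversal is needed.
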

\begin{pf}
Let $J_n$ be the unit anti-diagonal matrix. From the LEU decomposition of $J_nA$,
we have
  $A= \underbrace{J_nLJ_n}_V\underbrace{J_nE}_P U$ where $V$ is upper triangular.
\end{pf}
\begin{rem} \cite{Manthey:2007:Bruhat} gives also a unique {\em generalized Bruhat
    decomposition}, $A=XPY$, but with non-square and non-triangular matrices $X$,
  $Y$. There, $P$ is called the {\em Bruhat permutation} but contains only the
  $r$ nonzero rows and columns of the rank profile matrix. We show in
  Section~\ref{sec:XPY}, that $X$ and $Y$ are actual row and column echelon
  forms of the matrix.
\end{rem}

\subsection{Relation to LUP and PLU decompositions}

The \lup decomposition $A=LUP$ only exists for matrices with generic row rank profile
(including matrices with full row rank). Corollary~\ref{cor:elu} shows upon
which condition the permutation matrix $P$ equals the rank profile matrix $\RPM{A}$.
Note that although the row rank profile of $A$ is trivial in such
cases, the matrix $\RPM{A}$ still carries non-trivial information not only on the column rank
profile but also on the row and column rank profiles of any leading sub-matrix of $A$.
 
\begin{cor}\label{cor:elu}
  Let $A$ be an $m\times n$ matrix.

If $A$ has generic column rank profile, then any \plu decomposition $A=PLU$
    computed using reverse lexicographic order search and row rotations is such
    that 
    $\RPM{A}= P \begin{smatrix}I_r\\&0\end{smatrix}$. In particular,
    $P=\RPM{A}$ if $r=m=n$.
  
  If $A$ has generic row rank profile, then any \lup decomposition $A=LUP$
    computed using lexicographic order search and column rotations is such that 
    $\RPM{A}= \begin{smatrix}I_r \\ & 0\end{smatrix}P$. In particular,
    $P=\RPM{A}$ if $r=m=n$.
\end{cor}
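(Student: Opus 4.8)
The plan is to treat the two cases symmetrically and reduce each one to Theorem~\ref{th:RPandPerm} applied in the slab-recursive (or fully iterative) regime, as was done for the \plu and \lup variants in Section~\ref{sec:algo}. First I would observe that when $A$ has generic column rank profile, its first $r$ columns are linearly independent, so the \plu decomposition $A=PLU$ is genuinely a PLUQ decomposition with $Q=I_n$; in particular $\Pi_{P,I_n}=P\begin{smatrix}I_r\\&0\end{smatrix}$ and we only need to show that this pivoting matrix equals $\RPM{A}$. The key point is that a reverse-lexicographic search (topmost non-zero element of the first non-zero column) combined with row rotations makes $P^T$ $k$-monotonically increasing at every step $k$ (by Lemma~\ref{lem:permcompo}, since each rotation $R_{k,i}$ is $k$-monotonically increasing and they compose), while the trivial column permutation $Q=I_n$ is trivially $k$-monotonically increasing. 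Since the search space for pivots is the whole set of columns, Theorem~\ref{th:RPandPerm} applies with $p=n$: clause~(\ref{clause:Pmonotinc}) holds in its first alternative, clause~(\ref{clause:Qmonotinc}) holds trivially, and case~(\ref{th:RP:both}) lets us conclude by induction on the number of pivots that $\Pi_{P,Q}=\RPM{A}$, provided the base case (a single pivot) does so.

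For the base case I would check directly that the first pivot found by a reverse-lexicographic search sits in the first non-zero column, say column $p_1$, at the topmost non-zero row; hence the leading $m\times p_1$ submatrix has rank one with its unique ``staircase'' entry exactly at the chosen position, so $\Pi_{P_1,I}=E_{\ell,p_1}$ is indeed the rank profile matrix of that leading submatrix of $A$ — matching the hypotheses of Theorem~\ref{th:RPandPerm} for the inductive step. Combining this with the inductive application above yields $\RPM{A}=P\begin{smatrix}I_r\\&0\end{smatrix}$, and when $r=m$ the right factor is $I_m$, so $P=\RPM{A}$.

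The second case is the transpose of the first: if $A$ has generic row rank profile then $A^T$ has generic column rank profile, a lexicographic search with column rotations on $A$ corresponds to a reverse-lexicographic search with row rotations on $A^T$, and $\RPM{A^T}=\RPM{A}^T$ by uniqueness in Theorem~\ref{thm:rankprofilematrix}. Applying the first case to $A^T$ and transposing the resulting identity gives $\RPM{A}=\begin{smatrix}I_r\\&0\end{smatrix}P$, with $P=\RPM{A}$ when $r=n$. I expect the only delicate point to be the bookkeeping that reverse-lexicographic search plus row rotations really does keep $P^T$ monotonically increasing at each stage — i.e. that no rotation ever disturbs the echelon structure of the already-processed part of $P$ — but this is exactly the content of Lemma~\ref{lem:permcompo} and the remark following it, so no new argument is needed.
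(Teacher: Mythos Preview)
Your approach is essentially the paper's: both reduce to the pivoting analysis of Section~\ref{sec:algo} (summarized in Table~\ref{tab:RPRSearchPerm}, itself an application of Theorem~\ref{th:RPandPerm}) to conclude that reverse-lexicographic search with row rotations reveals the rank profile matrix, and then use the generic column rank profile hypothesis to force $Q=I_n$, so that $\Pi_{P,Q}=P\begin{smatrix}I_r\\&0\end{smatrix}$. The paper states this in two lines by invoking the table directly; you unpack the same induction via Theorem~\ref{th:RPandPerm}, and handle the second case by transposition rather than by symmetry, which is equivalent.

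One small slip worth cleaning up: for a reverse-lexicographic search the pivot is located column by column, so the natural leading block $A_1$ has $\ell=m$, not $p=n$ as you wrote (the ``$p=n$'' alternative in clause~(\ref{clause:Pmonotinc}) belongs to the row/lexicographic case). This does not affect your argument, since you already observe that $P_1^T$ is $k$-monotonically increasing (row rotations, Lemma~\ref{lem:permcompo}) and $Q_1^T=I_n$ is trivially so; hence both clauses~(\ref{clause:Pmonotinc}) and~(\ref{clause:Qmonotinc}) hold in their first alternative and you never need the $p=n$ or $\ell=m$ fallback. Also, in your base case you write ``the first non-zero column, say column $p_1$'': under the generic column rank profile hypothesis this column is necessarily column~$1$, which is what makes $\Pi_{P_1,I}=E_{\ell,1}$ match the rank profile matrix of the leading block --- you should say so explicitly, since otherwise $\Pi_{P_1,I}$ would have its $1$ in column~$1$ regardless of~$p_1$.
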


\begin{pf}
  Consider $A$ has generic column rank profile.
  From table~\ref{tab:RPRSearchPerm}, any \pluq decomposition algorithm with a reverse
  lexicographic order based search and rotation based row permutation is such
  that $\Pi_{P,Q}=P \begin{smatrix}  I_r\\& \end{smatrix}Q = \RPM{A}$. Since the
  search follows the reverse lexicographic order and the matrix has generic
  column rank profile, no column will be permuted in this elimination, and
  therefore $Q=I_n$.
  The same reasoning hold for when $A$ has generic row rank profile.
\end{pf}

Note that the $L$ and $U$ factors in a \plu decomposition are uniquely determined
by the permutation $P$. Hence, when the matrix has full row rank, $P=\RPM{A}$
and the decomposition $A=\RPM{A}LU$ is unique. Similarly the decomposition
$A=LU\RPM{A}$ is unique when the matrix has 
full column rank.
Now when the matrix is rank deficient with generic row rank profile, there is no
longer a unique \plu decomposition revealing the rank profile matrix: any
permutation applied to the last $m-r$ columns of $P$ and the last $m-r$ rows of
$L$ yields a \plu decomposition where $\RPM{A}=P \begin{smatrix} I_r\\& \end{smatrix}$. 

Lastly, we remark that the only situation where the rank profile matrix \RPM{A}
can be read directly as a sub-matrix of $P$ or $Q$ is as in
corollary~\ref{cor:elu}, when the matrix $A$ has generic row or column rank
profile. 
Consider a \pluq decomposition $A=PLUQ$ revealing the rank profile matrix
($\RPM{A}=P\begin{smatrix} I_r\\&\end{smatrix}Q$) such that $\RPM{A}$ is a
sub-matrix of $P$. This means that $P=\RPM{A}+S$ where $S$ has disjoint row and
column support with $\RPM{A}$.
We have
$  \RPM{A} = (\RPM{A}+S)  \begin{smatrix}    I_r\\&  \end{smatrix}  Q
          = (\RPM{A}+S)  \begin{smatrix}    Q_1\\0_{(n-r)\times
      n}  \end{smatrix}  
$.
Hence $\RPM{A}(I_n- \begin{smatrix}  Q_1\\0_{(n-r)\times n}  \end{smatrix}) =S \begin{smatrix}  Q_1\\0_{(n-r)\times n}  \end{smatrix}
$ but the row support of these matrices are disjoint, hence 
$\RPM{A}\begin{smatrix}  0\\I_{n-r}\end{smatrix}=0$ which implies that $A$ has
generic column rank profile.
Similarly, one shows that $\RPM{A}$ can be a sub-matrix of $Q$ only if $A$ has a
generic row rank profile.

\subsection{Computing Echelon forms}\label{sec:echelon}

Usual algorithms computing an echelon
form by~\cite{IMH:1982,KG:1985,Storjohann:2000:thesis,JPS:2013} use a slab block decomposition (with 
row or lexicographic order search), which implies that pivots appear in the
order of the echelon form. The column echelon form is simply obtained as $C=PL$
from the PLUQ decomposition.

We extend here this approach to any PLUQ decomposition that reveals the rank
profile matrix ($\RPM{A}=\Pi_{P,Q}$) and show how both the row and column
echelon forms can be recovered by only permutations on the $L$ and $U$ matrices.
Besides the ability to compute both echelon forms from one Gaussian elimination,
this enables one to choose from a broader range of algorithmic variants for
this Gaussian elimination.

Consider a PLUQ decomposition $A=PLUQ$ revealing the rank profile matrix and let 
$P_C = P_{*,1..r}$ be the first $r$ columns of $P$. The ones in $P_C$ indicate
the row position of the pivots in the column echelon form of $A$, but they may not
appear in the same column order. There exist an $r\times r$ permutation matrix
$S$ sorting the columns of $P_C$, i.e. such that $P_S = P_C S$ is in column echelon.

\begin{lem}\label{lem:echelon}
  The matrix $\begin{bmatrix} PL  S&0_{m\times n-r}  \end{bmatrix} $ is a column echelon form of $A$.
\end{lem}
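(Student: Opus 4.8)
The plan is to show that $\begin{bmatrix} PLS & 0_{m\times(n-r)}\end{bmatrix}$ is a column echelon form of $A$ by verifying the two defining conditions: that it is obtained from $A$ by right-multiplication by a nonsingular matrix (so that it has the right column space and the correct row rank profile), and that its nonzero columns are in staircase shape with the staircase leading positions matching the row rank profile of $A$.

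First I would establish the ``nonsingular right factor'' part. Starting from $A=PLUQ$, write $PL = \begin{bmatrix} PL_C & PL_D\end{bmatrix}$ where $L_C = L_{*,1..r}$ carries the $r$ pivot columns and $L_D$ the trailing $m-r$ columns (which are zero in the echelon picture since $U$ has only $r$ nonzero rows). Then $A = PL \begin{smatrix} U \\ 0\end{smatrix} Q = (PL_C)\, (U_{1..r,*} Q)$, and $U_{1..r,*}Q$ has full row rank $r$. Hence the column space of $A$ equals that of $PL_C$, and therefore equals that of $PL_C S$ since $S$ is an $r\times r$ permutation. Since $PL_C S$ already has $r$ columns and $A$ has rank $r$, padding with $n-r$ zero columns gives a matrix whose column space is exactly that of $A$; equivalently $\begin{bmatrix} PLS & 0\end{bmatrix} = A X$ for a nonsingular $n\times n$ matrix $X$ (explicitly built from $Q^T$, the extension of $U_{1..r,*}Q$ to an invertible matrix, and the permutation sorting/padding columns). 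This guarantees that the row rank profile of the candidate matrix coincides with $\RRP(A)$.

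Next I would verify the staircase shape. The key point is that $\Pi_{P,Q}=\RPM{A}$, so the $r$ ones of $P_C = P_{*,1..r}$ sit exactly at the row positions $\RRP(A)=\RS{\RPM{A}}$ (Lemma~\ref{lem:rpm:prop}), and $S$ is chosen precisely so that $P_S = P_C S$ is in column echelon form, i.e. the column order of $P_S$ lists these pivot rows in increasing order $i_1<i_2<\dots<i_r$. Now $L$ is unit lower triangular, so for each $j$ the $j$-th column of $L_C$ is supported on rows $j,\dots,r$; left-multiplying by $P$ sends row index $\ell$ of $L_C$ to row $\sigma_P(\ell)$ of $PL_C$. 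The leading (topmost) nonzero entry of the $j$-th column of $PL_C$ is the image under $\sigma_P$ of the diagonal entry, namely row $\sigma_P(j) = i_j'$, where after applying $S$ the columns are reordered so that the leading positions become $i_1<\dots<i_r$. One must check the entries of $PL_C S$ strictly above these leading positions vanish in the staircase sense: an entry of $PL$ in row $\sigma_P(\ell)$, column (pre-$S$) $j$ with $\ell>j$ could land above $i_j$ after reindexing, so the argument needs that whenever $\sigma_P(\ell)<i_j$ we are not in the relevant column of the echelon form — this follows because $S$ sorts the pivot rows, so the column of $P_S$ whose leading one is in row $i_j$ corresponds to taking the minimum of the remaining $\sigma_P$ values, matching exactly the triangular support of $L$.

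The main obstacle will be this last bookkeeping step: carefully tracking how the permutation $S$ interacts with the lower-triangular support of $L$ to confirm that no nonzero entry of $PL_C S$ appears ``too high'' in its column, i.e. that the leading-entry row indices are strictly increasing and all entries in earlier columns and higher rows vanish. I would handle it by writing $\sigma := \sigma_P|_{\{1,\dots,r\}}$ and the sorting permutation $S$ explicitly as the one with $\sigma(S(1))<\sigma(S(2))<\dots<\sigma(S(r))$, then showing that column $k$ of $PL_CS$ is supported on rows $\{\sigma(\ell):\ell\ge S(k)\}$, whose minimum is $\sigma(S(k))=i_k$ and which lies entirely at or below $i_k$ by the sortedness of $S$ — precisely the staircase condition. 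The analogous statement for the row echelon form (from $UQ$) is symmetric and I would note it follows ``the same way,'' or defer it as the companion lemma.
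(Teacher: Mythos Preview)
Your right-equivalence argument is fine; the problem is the staircase verification. The claim that column $k$ of $PL_CS$ is supported on rows $\{\sigma_P(\ell):\ell\ge S(k)\}$ with minimum $\sigma_P(S(k))$ is simply false if you only use that $L$ is lower triangular. Take $r=3$ and $\sigma_P|_{\{1,2,3\}}=(3,1,2)$ (which can occur for a PLUQ revealing the rank profile matrix, e.g.\ with a product-order search). Then $S=(2,3,1)$, and for $k=3$ you get $S(3)=1$, so the set $\{\sigma_P(\ell):\ell\ge 1\}$ contains $1$ and $2$, both strictly above $i_3=\sigma_P(S(3))=3$. Your appeal to ``sortedness of $S$'' confuses the condition $\ell\ge S(k)$ (which governs the triangular support of $L$) with the condition $S^{-1}(\ell)\ge k$ (which is what sortedness gives you). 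These coincide only when $S$ is the identity.

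What is missing is exactly the nontrivial content of Theorem~\ref{thm:leu}: because $\Pi_{P,Q}=\RPM{A}$, the matrix $\bar L=P\begin{bmatrix}L&0\end{bmatrix}P^T$ is lower triangular, equivalently $L_{a,b}=0$ whenever $b\le r$ and $\sigma_P(a)<\sigma_P(b)$. This extra zero pattern in $L$ (beyond mere lower-triangularity) is precisely what kills the bad entries in the counterexample above, and it genuinely requires the rank-profile-matrix hypothesis, not just bookkeeping. The paper's proof exploits this by writing $PLS=\bar L\,P_S$: once $\bar L$ is known to be lower triangular and $P_S$ is in column echelon by construction of $S$, the product $\bar L P_S$ is automatically in column echelon form. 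Your argument can be repaired by invoking Theorem~\ref{thm:leu} at the point where you bound the support of each column, but as written the ``last bookkeeping step'' does not go through.
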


\begin{proof}

$PLS = P \begin{bmatrix}  L&0_{m\times (m-r)}\end{bmatrix} P^T P
\begin{smatrix}   S\\0_{(m-r)\times r}\end{smatrix} = \bar L P_S$.
From Theorem~\ref{thm:leu}, the matrix $\bar L = P \begin{bmatrix}  L & 0_{m\times (m-r)}\end{bmatrix} P^T$
is lower triangular.
Multiplying $\bar L$ on the right by $P_S$ simply removes the zero columns in
$\bar L$ putting it in column echelon form.
Now the relation
$
A=\begin{bmatrix}  PLS & 0_{m\times(n-r)}\end{bmatrix}
\begin{smatrix}  S^T\\&I_{n-r}\end{smatrix}
\begin{smatrix}  {U}\\ 0_{(n-r)\times r} &I_{n-r}\end{smatrix} Q
$
shows that $\begin{bmatrix}  PLS & 0_{m\times(n-r)}\end{bmatrix}$ is
right-equivalent to $A$ and is therefore a column echelon form for
$A$. 
\end{proof}



%
Equivalently, the same reasoning applies for the recovery of a row echelon
form of $A$.
 Algorithm~\ref{alg:echelon} summarizes how a row and a column echelon form of $A$ can be
computed by sorting the first $r$ values of the permutations $\sigma_P$ and $\sigma_Q$.

\begin{algorithm}[htbp]
  \caption{Echelon forms from a PLUQ decomposition}
  \label{alg:echelon}
\begin{algorithmic}[1]
\Require{$P,L,U,Q$, a PLUQ decomposition of $A$ with $\RPM{A}=\Pi_{P,Q}$}
\Ensure{$C$: a column echelon form of $A$}
\Ensure{$R$: a row echelon form of $A$}
\State $(p_1,..,p_r) =\text{Sort}(\sigma_P(1),..,\sigma_P(r))$ \label{alg:line:sort}
\State $(q_1,..,q_r) =\text{Sort}(\sigma_Q(1),..,\sigma_Q(r))$ \label{alg:line:sort2}
\State $\tau = (\sigma_P^{-1}(p_1),..,\sigma_P^{-1}(p_r)) \in S_r$
\State $\chi = (\sigma_Q^{-1}(q_1),..,\sigma_Q^{-1}(q_r)) \in S_r $
\State $S_\tau = P(\tau)$
\State $S_\chi = P(\chi)^T$
\Return $C\leftarrow P  \begin{bmatrix}    L S_\tau & 0_{m\times (n-r)}  \end{bmatrix}$
and $R\leftarrow  
  \begin{smatrix} S_\chi U\\0_{(m-r)\times n}  \end{smatrix} Q$
\end{algorithmic}
\end{algorithm}

\begin{rem}
A row and column echelon form of the $i\times j$ leading
sub-matrix can be computed by removing rows of $PL$ below index $i$ and
filtering out the pivots of column index greater than $j$. 
The latter is achieved by replacing lines~\ref{alg:line:sort}
and~\ref{alg:line:sort2} in Algorithm~\ref{alg:echelon} by:
\newcounter{alglinesort}\setcounterref{alglinesort}{alg:line:sort}
\begin{algorithmic}[1]
  \makeatletter
  \setcounter{ALG@line}{-1}
  \addtocounter{ALG@line}{\thealglinesort}
  \makeatother
  \algrenewcommand\alglinenumber[1]{\footnotesize #1':}%
\State    $(p_1,..,p_s) = \text{Sort}( \{\sigma_P(k) : \sigma_Q(k)\leq j \text{
  and } 1\leq k\leq r\})$
\State    $(q_1,..,q_s) = \text{Sort}( \{\sigma_Q(k) : \sigma_P(k)\leq i \text{
  and } 1\leq k\leq r\})$.
\end{algorithmic}
\end{rem}

\subsection{The generalized Bruhat decomposition}\label{sec:XPY}

The generalization of the Bruhat decomposition for rank deficient matrices
of~\cite[Theorem~2.4]{Manthey:2007:Bruhat} is of the form $A=XFY$ where $X$ is in
column echelon form, $Y$ in row echelon form, and $F$ is a $r\times r$
permutation matrix.
Such a decomposition immediately follows from the echelon forms computed by
Algorithm~\ref{alg:echelon}.
Indeed, 
by Lemma~\ref{lem:echelon}, $PLS_\tau$ is in column echelon form and $S_\chi UQ$ is
in row echelon form. The identity
\begin{equation}\label{eq:XPY}
 A = \underbrace{\left(PLS_\tau\right)}_{X}
 \underbrace{S_\tau^TS_\chi^T}_{F} 
 \underbrace{\left(S_\chi{}UQ\right)}_{Y}
\end{equation}
then directly gives an $A=XFY$ decomposition.

In a similar way as in the full-rank case (see Section~\ref{ssec:LEU})
\cite[Theorem~2.4]{Manthey:2007:Bruhat} requires an additional condition on the
$X$ matrix to ensure the uniqueness of the decomposition: $F^TX_{\RRP(X),*}F$ must be lower
triangular, where $X_{I,*}$ denotes the lower triangular submatrix of $X$ formed
by the rows of its row rank profile.
This condition is not satisfied in general by the $X$ and $Y$ matrices computed
from Algorithm~\ref{alg:echelon} and a PLUQ decomposition revealing the rank
profile matrix. As a counter example, the matrix $\begin{smatrix}
  0&1\\1&a\end{smatrix}$ has two PLUQ decompositions revealing the rank profile
matrix $F=J_2=\begin{smatrix} 0& 1\\1&0\end{smatrix} $:
$A= J_2 I_2 \begin{smatrix}  1&a\\0&1\end{smatrix} I_2$ and $A=I_2
\begin{smatrix}  1&0\\a&1\end{smatrix} I_2 J_2$. For the latter one, $F^TXF =
\begin{smatrix} 1&a\\0&1\end{smatrix}$ which is not lower triangular.

In a same way as in equation~\eqref{eq:uniqueLEU}, we derive an equivalent
condition for the uniqueness of the generalized Bruhat decomposition.
We first need the following Lemma.
\begin{lem}
Let $C$ be an $m\times r$ matrix in column echelon form, and $P$ the $m\times r$
$r$-sub-permutation matrix with  ones at the location of the pivots of $C$.
Then 
$$
C_{\RRP(X),*} = P^T C.
$$
\end{lem}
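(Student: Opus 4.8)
The statement is a simple bookkeeping identity about column echelon forms and their pivot positions. The plan is to unwind the definitions of the two sides and check that they agree entrywise. Write $C\in\K^{m\times r}$ in column echelon form, and let $I = \RRP(C)$ be its row rank profile, which (since $C$ is in column echelon form with exactly $r$ pivots, one per column) is precisely the set of $r$ row indices carrying the pivots. The matrix $C_{\RRP(C),*}$ is then the $r\times r$ submatrix obtained by keeping only those pivot rows; since $C$ is in column echelon form, this submatrix is lower triangular with the pivot entries on its diagonal. On the other side, $P$ is the $m\times r$ matrix with a single $1$ in each column, located at the pivot position of the corresponding column of $C$; i.e. $P_{i,k}=1$ exactly when row $i$ is the pivot row of column $k$.

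First I would make the pivot structure explicit: let $\sigma:\{1,\dots,r\}\to\{1,\dots,m\}$ be the strictly increasing map sending column index $k$ to the row index $\sigma(k)$ of its pivot, so that $\RRP(C)=\{\sigma(1)<\dots<\sigma(r)\}$ and $P_{i,k}=[\,i=\sigma(k)\,]$. Then $P^T C$ is the $r\times r$ matrix whose $(k,\ell)$ entry is $\sum_{i} P_{i,k} C_{i,\ell} = C_{\sigma(k),\ell}$, i.e. $(P^TC)_{k,\ell}=C_{\sigma(k),\ell}$. But this is exactly the $k$-th row of the submatrix of $C$ formed by the rows indexed by $\sigma(1),\dots,\sigma(r)$ in increasing order, which is by definition $C_{\RRP(C),*}$ (note the statement writes $X$ where it means $C$; the rows are taken in rank-profile order). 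Hence $P^TC = C_{\RRP(C),*}$, which is the claim.

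The only mild subtlety — and the single point worth spelling out — is why $\RRP(C)$ coincides with the set of pivot rows of $C$, i.e. why a column echelon form has its row rank profile equal to $\{\sigma(1),\dots,\sigma(r)\}$; this follows because the staircase shape guarantees that rows $\sigma(1),\dots,\sigma(k)$ already span the row space of the first $k$ columns while every earlier row is zero on those columns, so the pivot rows are exactly the lexicographically-first maximal independent set. Once that is in hand, the equality of the two matrices is just the computation $P^TC$ above, and there is no real obstacle; the proof is a couple of lines.

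\begin{pf}
Since $C$ is in column echelon form of rank $r$, it has exactly one pivot per column; let $\sigma:\{1,\dots,r\}\to\{1,\dots,m\}$ be the strictly increasing map sending a column index $k$ to the row index $\sigma(k)$ of its pivot. By the staircase shape, for every $k$ the rows $\sigma(1),\dots,\sigma(k)$ span the column space of $C_{*,1..k}$ while all rows of index $<\sigma(1)$ (and more generally all non-pivot rows preceding $\sigma(k)$) are zero on the first $k$ columns; hence $\RRP(C)=\{\sigma(1)<\dots<\sigma(r)\}$ and $C_{\RRP(C),*}$ is the $r\times n$ matrix whose $k$-th row is $C_{\sigma(k),*}$. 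On the other hand $P$ is the $m\times r$ $r$-sub-permutation matrix with $P_{i,k}=1$ iff $i=\sigma(k)$, so for all $k,\ell$ we have $(P^TC)_{k,\ell}=\sum_i P_{i,k}C_{i,\ell}=C_{\sigma(k),\ell}$, i.e.\ the $k$-th row of $P^TC$ is $C_{\sigma(k),*}$. Therefore $P^TC=C_{\RRP(C),*}$.
\end{pf}
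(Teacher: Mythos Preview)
Your proof is correct; the paper actually states this lemma without proof, so there is nothing to compare against. Your argument---identify the pivot-row map $\sigma$, observe that $\RRP(C)=\{\sigma(1)<\dots<\sigma(r)\}$, and compute $(P^TC)_{k,\ell}=C_{\sigma(k),\ell}$---is exactly the intended unwinding. Two minor slips worth cleaning up: (i) in the middle of the proof you write ``$C_{\RRP(C),*}$ is the $r\times n$ matrix'' but $C$ has $r$ columns, so this should be $r\times r$; (ii) the parenthetical ``all non-pivot rows preceding $\sigma(k)$ are zero on the first $k$ columns'' is not quite right (such a row may be nonzero in columns $1,\dots,k-1$); what is true, and sufficient, is that every row of index $<\sigma(k)$ is zero on columns $k,\dots,r$, so those rows lie in the span of rows $\sigma(1),\dots,\sigma(k-1)$. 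Neither slip affects the validity of your conclusion.
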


As $P_{*,1..r}S_\tau$ is the $r$-sub-permutation matrix with ones on the
locations of the pivots of the column echelon form $X$, we deduce that
$$
F^T X_{\RRP(X),*}F = S_\chi S_\tau S_\tau^T (P_{*,1..r})^T PLS_\tau S_\tau^T
S_\chi^T = S_\chi L_{1..r,1..r} S_\chi^T.
$$

Again, a sufficient condition for this matrix to be lower triangular is that
$S_\chi=I_r$ which happens if the PLUQ decomposition has been computed using a
pivoting respecting the reverse lexicographic order.

\begin{cor}
  Let $A=PLUQ$ be a PLUQ decomposition computed with a pivoting strategy
  minimizing the reverse lexicographic order and performing row rotations. Then
  the $XFY$ decomposition of equation~\eqref{eq:XPY} is the unique generalized Bruhat
  decomposition as defined in~\cite[Theorem~2.4]{Manthey:2007:Bruhat}.
\end{cor}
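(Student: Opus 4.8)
The plan is to reuse everything that has just been set up and to reduce the statement to a single order-theoretic observation. By the relevant entry of Table~\ref{tab:RPRSearchPerm}, which itself follows from Theorem~\ref{th:RPandPerm}, a \pluq decomposition obtained with a reverse lexicographic order search and row rotations reveals the rank profile matrix, i.e.\ $\Pi_{P,Q}=\RPM{A}$. Hence Algorithm~\ref{alg:echelon} applies, and by Lemma~\ref{lem:echelon} it returns $X=PLS_\tau$ in column echelon form and $Y=S_\chi UQ$ in row echelon form; together with $F=S_\tau^TS_\chi^T$, equation~\eqref{eq:XPY} is then an $A=XFY$ generalized Bruhat decomposition. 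So existence is immediate, and the whole content of the corollary is the \emph{uniqueness} clause of~\cite[Theorem~2.4]{Manthey:2007:Bruhat}, namely that $F^TX_{\RRP(X),*}F$ is lower triangular.

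The second step dispatches that clause exactly as in the full-rank case of equation~\eqref{eq:uniqueLEU}. Using the Lemma stated just before the corollary together with the identity displayed right after it, $F^TX_{\RRP(X),*}F=S_\chi L_{1..r,1..r}S_\chi^T$. Since $L_{1..r,1..r}$ is unit lower triangular, it therefore suffices to prove $S_\chi=I_r$, i.e.\ — by the construction of $S_\chi$ in Algorithm~\ref{alg:echelon} — that the sequence $\sigma_Q(1),\dots,\sigma_Q(r)$ of original column indices of the successive pivots is monotonically increasing.

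The last step, and the only place needing genuine care, is to establish that monotonicity from the reverse lexicographic search. The argument I would give is that the $k$-th pivot is always chosen in the leftmost not-yet-exhausted column, while every column of strictly smaller original index is by then either a former pivot column or identically zero in the remaining submatrix; the pivot permutation only swaps that column with an all-zero column lying to its left (be it a transposition or a rotation), so it cannot create a non-zero entry in a column of smaller original index. Hence $\sigma_Q(1)<\dots<\sigma_Q(r)$, $\chi=\mathrm{id}$ and $S_\chi=I_r$, which makes $F^TX_{\RRP(X),*}F=L_{1..r,1..r}$ lower triangular and concludes. The main obstacle is precisely this subtlety: for the \emph{lexicographic} order, column transpositions break exactly this kind of order preservation (Remark~\ref{rem:SwapsConterex}); what rescues the \emph{reverse} lexicographic order is that a transposition here only ever exchanges the pivot column with a zero column situated to its left, so monotonicity of the discovered pivot columns survives any admissible column permutation — in particular one does not even need column rotations for this part of the argument.
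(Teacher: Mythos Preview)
Your proof is correct and follows the same route as the paper: reduce uniqueness to $F^TX_{\RRP(X),*}F=S_\chi L_{1..r,1..r}S_\chi^T$ being lower triangular, and then argue that reverse lexicographic pivoting forces $S_\chi=I_r$. You supply more justification for the monotonicity of $\sigma_Q(1),\dots,\sigma_Q(r)$ than the paper does (it simply asserts this consequence of reverse lexicographic order), and your remark that the argument does not depend on the choice of column permutation is a correct and useful sharpening.
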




\section{Improvement for low rank matrices}
\label{sec:lowrank}
\subsection{Storjohann and Yang's algorithm}
An alternative way to compute row and column rank profiles has recently been
proposed by Storjohann and Yang in~\cite{StoYan14,StoYan15}, reducing the time
complexity from a deterministic $O(mnr^{\omega-2})$ to a  Monte
Carlo probabilistic $2r^3+O(r^2(\log m + \log n) +mn)$ in~\cite{StoYan14} first,
and to $(r^\omega+mn)^{1+o(1)}$ in~\cite{StoYan15}, under the condition that the base field
contains at least $2\min(m,n)(1+\lceil\log_2 m\rceil+\lceil\log_2 n\rceil)$ elements. We show how these results
can be extended to the computation of the rank profile matrix.

The technique of Storjohann and Yang originates from the oracle system solving technique
of~\cite[\S 2]{MuSto00}. There, a linear system is solved or proved to be
inconsistent, by the incremental construction of a non-singular sub-matrix of
the augmented matrix of the system, and the computation of its inverse by rank one updates.

Applied to a right-hand-side vector $b$ sampled uniformly from the
column space of a matrix, this technique is used to incrementally build a list
of row indices $\mathcal{P} = [i_1,\dots,i_s]$ and of column indices
$\mathcal{Q} = [j_1,\dots j_s]$, for $s=1,\dots, r$, forming an incrementally
growing invertible sub-matrix $\submat{A}{\mathcal{P}}{\mathcal{Q}}$.
Each coordinate $(i_s,j_s)$ is that of a pivot as in a standard Gaussian
elimination algorithm, but the complexity improvement comes from the fact that
the elimination is only performed on this $s\times s$ submatrix, and not on the
linearly dependent rows and columns of the matrix $A$. To achieve this, 
the search for pivots is done by limiting the Schur complement computation to the
last column of the augmented system $\left[\begin{array}{c|c}  A&b \end{array}\right]$:
$$
b - \submat{A}{*}{\mathcal{Q}} \cdot (\submat{A}{\mathcal{P}}{\mathcal{Q}})^{-1} \cdot \subvec{b}{\mathcal{P}}.
$$
Under a
randomized assumption, the first non-zero component of this vector indicates the
value $i_s$ of the next linearly independent row. A similar approach is used for
determining the value of $j_s$.

As shown in~\cite[Theorem~6]{StoYan14}, the algorithm ends up
computing the row rank profile in $\mathcal{P}$  and the column rank profile in
$\mathcal{Q}$.
In fact, it computes all information of the rank profile matrix.

\begin{thm}
The $r$-permutation matrix whose ones are located at positions
$(i_s,j_s)$ for $1\leq s\leq r$ is precisely the rank profile matrix of $A$.
\end{thm}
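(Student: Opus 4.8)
The plan is to show that the incremental index pairs $(i_s,j_s)$ produced by Storjohann--Yang's algorithm coincide, position by position, with the ones of $\RPM{A}$, by induction on $s$. Concretely, I would prove the following invariant: after step $s$, the sub-permutation matrix $\Pi^{(s)}$ with ones at $(i_1,j_1),\dots,(i_s,j_s)$ satisfies $\Pi^{(s)} = \RPM{A_{1..i_s,1..n}}$ restricted to its first $i_s$ rows, and more precisely that $\{(i_1,j_1),\dots,(i_s,j_s)\}$ are exactly the ones of $\RPM{A}$ lying in the first $i_s$ rows, with $i_1<i_2<\dots<i_s$. Since $\mathcal{P}$ is built in increasing order (the algorithm always takes the \emph{first} non-zero component of the residual vector as the next row index), the row indices are automatically sorted, so the statement reduces to identifying, for each newly found row $i_s$, the correct column $j_s$.

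The key step is to reinterpret the pivot search of the algorithm in the language of Section~\ref{sec:WhenPLUQ}. The residual vector $b - \submat{A}{*}{\mathcal{Q}}(\submat{A}{\mathcal{P}}{\mathcal{Q}})^{-1}\subvec{b}{\mathcal{P}}$ vanishes exactly on the rows that are linearly dependent on the rows indexed by $\mathcal P$; under the randomized assumption this is precisely the set of rows in the row space of $A_{\mathcal{P},*}$. Hence its first non-zero entry is the smallest row index $i$ not in the span of $A_{i_1,*},\dots,A_{i_{s},*}$, i.e. the $(s{+}1)$-st element of $\RRP(A)$ — which, by Lemma~\ref{lem:rpm:prop}, is the row index of the $(s{+}1)$-st one of $\RPM{A}$ read top to bottom. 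The symmetric search for $j_{s+1}$, performed on the left-hand side after adjoining row $i_{s+1}$, returns the smallest column index $j$ such that $\submat{A}{\mathcal{P}\cup\{i_{s+1}\}}{\mathcal{Q}\cup\{j\}}$ is invertible; equivalently, $\rank(A_{1..i_{s+1},1..j}) = \rank(A_{1..i_{s+1},1..j-1})+1$ while this increment does not occur for any row index below $i_{s+1}$. That is exactly the column index $k$ appearing in the induction step of the proof of Theorem~\ref{thm:rankprofilematrix}: the smallest $k$ with $\rank(A_{1..i_{s+1},1..k}) = \rank(A_{1..i_{s},1..k})+1$. Therefore $j_{s+1}$ is the column of the one that Theorem~\ref{thm:rankprofilematrix}'s construction places in row $i_{s+1}$, and the induction closes.

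The main obstacle is making the correspondence between "the invertible $s\times s$ submatrix $\submat{A}{\mathcal{P}}{\mathcal{Q}}$ maintained by the algorithm" and "the leading-submatrix rank conditions defining $\RPM{A}$" fully rigorous, since the algorithm never computes a Schur complement on the whole matrix and the equivalence relies on the genericity (randomized) hypothesis on $b$. I would handle this by invoking the correctness statement already cited from~\cite[Theorem~6]{StoYan14} — namely that $\mathcal P = \RRP(A)$ and $\mathcal Q = \CRP(A)$ — and then only needing the finer claim that the \emph{pairing} of row $i_s$ with column $j_s$ is forced: given that $i_{s+1}$ is the $(s{+}1)$-st row-rank-profile index and that $j_{s+1}$ is chosen minimal making the augmented submatrix non-singular, a short rank-counting argument (identical to the "$u$ is linearly dependent with the columns of $B$" paragraph in the proof of Theorem~\ref{thm:rankprofilematrix}) shows $j_{s+1}$ equals the column where $\RPM{A}$ has its one in row $i_{s+1}$. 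Uniqueness of $\RPM{A}$ from Theorem~\ref{thm:rankprofilematrix} then lets me conclude the assembled $r$-sub-permutation matrix is $\RPM{A}$ without having to track all leading submatrices simultaneously.
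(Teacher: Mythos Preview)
Your approach is correct but takes a genuinely different route from the paper's proof. The paper does not carry out a direct induction matching $(i_s,j_s)$ against the construction of Theorem~\ref{thm:rankprofilematrix}; instead it observes that Storjohann--Yang's procedure is, step by step, an instance of Gaussian elimination with a specific pivoting strategy already classified in Section~\ref{sec:WhenPLUQ}: the row index $i_s$ is the first linearly independent row (row order), and on that row the exact Schur complement $A_{i_s,*}-A_{i_s,\mathcal Q}(A_{\mathcal P,\mathcal Q})^{-1}A_{\mathcal P,*}$ is scanned for its leftmost nonzero, so the search minimizes the lexicographic order. Since previously selected columns are zeroed out rather than transposed, the effective column permutation preserves the order of the remaining columns, i.e.\ behaves like a rotation. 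The paper then simply cites the fourth line of Table~\ref{tab:RPRSearchPerm} (the lexicographic~/ column-rotation entry, backed by Theorem~\ref{th:RPandPerm}) to conclude $\Pi_{P,Q}=\RPM{A}$ in two sentences.

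What each buys: your direct induction is self-contained and does not rely on the pivoting-strategy classification, but it forces you to reprove a special case of Theorem~\ref{th:RPandPerm}. In particular, the step you flag as the ``main obstacle'' --- showing that the minimal $j$ making $A_{\mathcal P\cup\{i_{s+1}\},\mathcal Q\cup\{j\}}$ nonsingular coincides with the column of the one of $\RPM{A}$ in row $i_{s+1}$ --- needs more than the paragraph of Theorem~\ref{thm:rankprofilematrix} you point to. A clean way to close it is via Lemma~\ref{lem:rpm:prop}: by the induction hypothesis the ones of $\RPM{A}$ in rows $1..i_s$ sit at $(i_t,j_t)$, so $\CRP(A_{1..i_{s+1},1..j})=\{j_t:j_t\le j\}$ for every $j<j^*$, whence column $j$ lies in the span of columns $\mathcal Q\cap\{1,\dots,j\}\subseteq\mathcal Q$ and $v_j=0$; and $\mathcal Q\cup\{j^*\}=\CRP(A_{1..i_{s+1},*})$ gives the nonsingularity at $j^*$. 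Once you add this, your argument is complete --- but note it is precisely the content that Theorem~\ref{th:RPandPerm} already packages, which is why the paper's proof is one paragraph.
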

\begin{pf}
  In this algorihtm, the pivot search looks for the first non-zero element in
  the Schur complement of $b$ which, with probability $1-1/\#\mathrm{K}$, will
  correspond to the first linearly independent row. 
  The exact Schur complement can then be computed on this row only:
$$
\submat{A}{i_s}{*} - \submat{A}{i_s}{\mathcal{Q}} \cdot (\submat{A}{\mathcal{P}}{\mathcal{Q}})^{-1}\cdot \submat{A}{\mathcal{P}}{*},
$$
so as to find the column position $j_s$ of the  first non zero element in it.
This search strategy minimizes the lexicographic order of the pivot coordinates.
No permutation strategy is being used, instead the columns where pivots have
already been found are zeroed out in the Schur complement computation, which has
the same  effect as a Column rotation. From the 4th line of
Table~\ref{tab:RPRSearchPerm}, this pivoting strategy reveals the rank profile matrix.
\end{pf}
The complexity of the direct Monte Carlo algorithm, in $O((m+n)r^2)$, hence
directly applies for the computation of the rank profile matrix. 
This complexity is reduced to $2r^3+O(r^2(\log n+\log m) +mn)$ by the
introduction of linear independence oracles in a divide and conquer 
scheme. Upon success of the probabilistic asumption, the values for
$\mathcal{P}$ and $\mathcal{Q}$ are unchanged.
Consequently, \cite[Theorem~19, Corollary~21]{StoYan14} also hold for the
computation of the rank profile matrix.
\begin{cor}
  There exist a Monte Carlo algorithm computing the rank profile matrix that has
  running time bounded by $2r^3+(r^2+m+n+|A|)^{1+o(1)}$, where $|A|$ denotes
  the number of non-zero elements in $A$.
\end{cor}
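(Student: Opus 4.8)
The plan is to combine the theorem just proved---which identifies the output of the algorithm of~\cite{StoYan14} with the rank profile matrix---with the complexity analysis already carried out in that same reference. First I would recall that the direct Monte Carlo procedure searches pivots by locating, with probability at least $1-1/\#\K$, the first non-zero entry of the Schur complement of the right-hand side, and then the column position of the first non-zero entry of the corresponding Schur-complemented row; this is exactly a lexicographic-order pivot search, while the zeroing-out of the already-used columns plays the role of a column rotation. By the fourth line of Table~\ref{tab:RPRSearchPerm}---equivalently, by Theorem~\ref{th:RPandPerm}, case~(\ref{th:RP:both}), applied with $k=1$ at each step---this strategy reveals the rank profile matrix, so the $r$-sub-permutation matrix supported on the positions $(i_s,j_s)$ equals $\RPM{A}$.

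Next I would argue that the speed-up of~\cite{StoYan14}, which replaces the incremental pivot search by a divide-and-conquer scheme built on linear-independence oracles, does not alter this conclusion. The key observation is that, conditioned on the success of the randomized hypotheses, this reorganization returns exactly the same lists $\mathcal{P}=[i_1,\dots,i_r]$ and $\mathcal{Q}=[j_1,\dots,j_r]$: the oracles only certify which rows (resp.\ columns) are linearly dependent on the previously selected ones, and the order in which linearly independent rows and columns are committed is still the lexicographic one. Hence the $r$-sub-permutation matrix with ones at $(i_s,j_s)$ is still $\RPM{A}$, and \cite[Theorem~19 and Corollary~21]{StoYan14} apply verbatim, yielding the running-time bound $2r^3+(r^2+m+n+|A|)^{1+o(1)}$.

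The main obstacle I anticipate is precisely the verification in the previous paragraph: one must check that the divide-and-conquer variant does not merely produce \emph{some} maximal set of linearly independent rows and columns, but the lexicographically minimal one at every recursive level, so that the fourth line of Table~\ref{tab:RPRSearchPerm} still governs it. This amounts to tracing through the oracle construction of~\cite{StoYan14} and confirming that a row (resp.\ column) is skipped only when it is genuinely dependent on the already-accepted ones of smaller index, i.e.\ that no independent row of smaller index is ever bypassed. Once this is in place the corollary is immediate, since the complexity statement of~\cite[Corollary~21]{StoYan14} is insensitive to the reinterpretation of its output as the full rank profile matrix rather than merely as the pair of row and column rank profiles.
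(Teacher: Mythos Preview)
Your proposal is correct and matches the paper's approach exactly: the paper argues, in the two sentences preceding the corollary, that the divide-and-conquer speed-up leaves $\mathcal{P}$ and $\mathcal{Q}$ unchanged upon success of the probabilistic assumption, so \cite[Theorem~19, Corollary~21]{StoYan14} carry over verbatim to the rank profile matrix. Your first paragraph re-derives the preceding theorem rather than simply invoking it, and your third paragraph is more explicit than the paper about what must be checked in~\cite{StoYan14}, but the substance is identical.
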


 The $r^3$ term in this complexity is from the iterative construction of
the inverses of the non-singular sub-matrices of order $s$ for $1\leq s\leq r$, by
rank one updates. To reduce this complexity to $O(r^\omega)$,
\cite{StoYan15} propose a relaxed computation of this
online matrix inversion.
In order to group arithmetic operations into matrix multiplications, their
approach is to anticipate part of the updates on the columns that will be appended
in the future. This however requires that one has an a priori knowledge on the
position of $r$ linearly independent columns of the initial matrix $A$ and, even further,
that the sub-matrix formed by these linearly independent columns has generic rank
profile.
The first condition is ensured by the Monte Carlo selection of linearly
independent columns of~\cite[Theorem 2.11]{CKL:13} and the second by the use of
a lower triangular Toeplitz preconditioner as in~\cite[Theorem 2]{KaSa:91}.
In the process, the row rank profile can be recovered, but all information on
the column rank profile is lost, and the rank profile matrix can thus not be recovered.

Yet, this algorithm can be run twice (once on $A$ and once on $A^T$), to recover
the row and the column rank profiles, and extract the corresponding  $r\times r$ invertible sub-matrix
$\submat{A}{\mathcal{P}}{\mathcal{Q}}$ of $A$. Following Corollary~\ref{cor:elu}, it then suffices to compute a PLU
decomposition of this sub-matrix, using an appropriate pivoting strategy, to
recover its rank profile matrix and therefore that of $A$ as well.

\begin{algorithm}
\begin{algorithmic}[1]
\caption{Low Rank Profile Matrix}
\label{alg:rpm:oracle}
\Require{$A$, an $m\times n$ matrix of rank $r$ over a field $\K$.}
\Ensure{$\RPM{A}$ the rank profile matrix of $A$ or FAILURE}
\State Compute the Row rank profile $\mathcal{P}=[i_1,\dots,i_r]$ of $A$ using~\cite{StoYan15}
\State Compute the Column rank profile $\mathcal{Q}=[j_1,\dots,j_r]$ of $A$
using~\cite{StoYan15} applied on $A^T$.
\State Let $B = \submat{A}{\mathcal{P}}{\mathcal{Q}}$
\State Compute $B = LUP$ a LUP decomposition of $B$ using a
  lexicographic pivot search and rotations for the column permutations.
\State  Return $\RPM{A} = P_\mathcal{P} \begin{smatrix} P &\\&0_{(m-r)\times (n-r))} \end{smatrix} P_\mathcal{Q}^T$
\end{algorithmic}
\end{algorithm}

\begin{thm}
  Algorithm~\ref{alg:rpm:oracle} is Monte Carlo probabilistic and computes the
  rank profile matrix of $A$ in time $(r^\omega+m+n+|A|)^{1+o(1)}$ field operations in $\K$. 
\end{thm}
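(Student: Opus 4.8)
The plan is to split the statement into a correctness claim — the matrix returned in step~5 equals $\RPM{A}$ whenever the two randomized calls succeed — and a complexity/probabilistic-model claim, the latter being obtained by bookkeeping on top of the bounds of~\cite{StoYan15}.

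For correctness, write $\mathcal{P}=\RRP(A)$ and $\mathcal{Q}=\CRP(A)$ for the true row and column rank profiles; on success these are exactly the lists returned in steps~1--2. First I would record the structural facts needed for steps~3--5: the rows of $A$ indexed by $\mathcal{P}$ form a basis of the row space of $A$, hence $A_{\mathcal{P},*}$ has the same right null space as $A$ and therefore the same column rank profile $\mathcal{Q}$; consequently $B=A_{\mathcal{P},\mathcal{Q}}$ is an $r\times r$ invertible matrix. Being square of full rank, $B$ has generic row rank profile, so Corollary~\ref{cor:elu} applies to the $\lup$ decomposition $B=LUP$ computed in step~4 (lexicographic search, column rotations) and gives $P=\RPM{B}$. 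It then remains to prove the identity
\[
\RPM{A}=P_\mathcal{P}\begin{bmatrix}\RPM{B}&\\&0_{(m-r)\times(n-r)}\end{bmatrix}P_\mathcal{Q}^T .
\]
By Lemma~\ref{lem:rpm:prop}, $\RS{\RPM{A}}=\mathcal{P}$ and $\CS{\RPM{A}}=\mathcal{Q}$, so $\RPM{A}$ is supported entirely on the rows $\mathcal{P}$ and columns $\mathcal{Q}$ and equals $P_\mathcal{P}\begin{smatrix}M&\\&0\end{smatrix}P_\mathcal{Q}^T$ with $M=(\RPM{A})_{\mathcal{P},\mathcal{Q}}$ an $r\times r$ permutation matrix; the identity thus reduces to $M=\RPM{B}$, which I would deduce from the uniqueness part of Theorem~\ref{thm:rankprofilematrix} applied to $B$, i.e.\ by checking $\rank(B_{1..k,1..l})=\rank(M_{1..k,1..l})$ for all $k,l\le r$.

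The step I expect to be the main obstacle is exactly this rank equality, namely showing
\[
\rank\!\left(A_{\mathcal{P}[1..k],\mathcal{Q}[1..l]}\right)=\rank\!\left((\RPM{A})_{\mathcal{P}[1..k],\mathcal{Q}[1..l]}\right)\qquad\text{for all }k,l,
\]
since it is the only place where the general (non-leading) submatrices cut out by rank-profile indices must be matched against leading-submatrix data. My approach is a two-step reduction to leading submatrices of $A$, where Theorem~\ref{thm:rankprofilematrix} is available. Applying Lemma~\ref{lem:rpm:prop}(4) with $j=n$ gives $\RRP(A_{1..\mathcal{P}[k],*})=\mathcal{P}[1..k]$, so the rows $\mathcal{P}[1..k]$ span the row space of $A_{1..\mathcal{P}[k],*}$; projecting both matrices onto the first $\mathcal{Q}[l]$ columns then yields $\rank(A_{\mathcal{P}[1..k],1..\mathcal{Q}[l]})=\rank(A_{1..\mathcal{P}[k],1..\mathcal{Q}[l]})$. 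Symmetrically, Lemma~\ref{lem:rpm:prop}(5) with $i=m$ gives $\CRP(A_{*,1..\mathcal{Q}[l]})=\mathcal{Q}[1..l]$, so the columns $\mathcal{Q}[1..l]$ span the column space of $A_{*,1..\mathcal{Q}[l]}$; pushing this through the coordinate projection onto rows $\mathcal{P}[1..k]$ shows $\rank(A_{\mathcal{P}[1..k],\mathcal{Q}[1..l]})=\rank(A_{\mathcal{P}[1..k],1..\mathcal{Q}[l]})$. Chaining the two equalities, invoking $\rank(A_{1..\mathcal{P}[k],1..\mathcal{Q}[l]})=\rank((\RPM{A})_{1..\mathcal{P}[k],1..\mathcal{Q}[l]})$ from Theorem~\ref{thm:rankprofilematrix}, and finally observing that the number of ones of $\RPM{A}$ in the box $[1..\mathcal{P}[k]]\times[1..\mathcal{Q}[l]]$ equals the number of ones of $M$ in $[1..k]\times[1..l]$ (since the ones of $\RPM{A}$ lie at positions $(\mathcal{P}[s],\mathcal{Q}[t])$), closes the equality, hence $M=\RPM{B}$, and correctness follows.

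For complexity and the Monte Carlo nature, steps~1 and~2 each cost $(r^\omega+m+n+|A|)^{1+o(1)}$ by~\cite{StoYan15} — step~2 being run on $A^T$, which has the same rank $r$ and the same number of nonzeros — and these are the only randomized ingredients: on success the returned $\mathcal{P},\mathcal{Q}$ are correct and everything downstream is deterministic and exact by the argument above, whereas on failure the extracted $B$ is detected to be singular (or the consistency check fails) and the algorithm reports FAILURE. Extracting $B$ in step~3 is a single pass over the nonzeros of $A$ with index lookup tables, in $(r^2+m+n+|A|)^{1+o(1)}$; step~4 is an $O(r^\omega)$ dense $\lup$ decomposition of an $r\times r$ matrix; step~5 only applies permutations and writes the $r$ nonzeros of the output. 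Summing gives the claimed $(r^\omega+m+n+|A|)^{1+o(1)}$ bound.
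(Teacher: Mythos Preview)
Your proposal is correct and follows the same line as the paper: invoke~\cite{StoYan15} twice to obtain $\mathcal{P}$ and $\mathcal{Q}$, apply Corollary~\ref{cor:elu} to the invertible $r\times r$ block $B=A_{\mathcal{P},\mathcal{Q}}$ to get $\RPM{B}$, and then re-embed. The paper does not give a formal proof of this theorem---its justification is the short discussion surrounding the algorithm, which simply asserts that ``it then suffices to compute a PLU decomposition of this sub-matrix\dots to recover its rank profile matrix and therefore that of $A$ as well'' and that step~5 ``consists in the introduction of zero rows and columns''; your two-step reduction of $\rank(A_{\mathcal{P}[1..k],\mathcal{Q}[1..l]})$ to $\rank(A_{1..\mathcal{P}[k],1..\mathcal{Q}[l]})$ via Lemma~\ref{lem:rpm:prop} supplies the rigorous verification of the embedding identity $\RPM{A}=P_{\mathcal{P}}\bigl[\begin{smallmatrix}\RPM{B}&\\&0\end{smallmatrix}\bigr]P_{\mathcal{Q}}^T$ that the paper leaves implicit.
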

Note that the last operation consists in the introduction of zero rows and
columns to expand the $r\times r$ permutation matrix $P$ into the $m\times n$
matrix $\RPM{A}$. The position of these zero rows and columns are deduced from the
row rank profile $\mathcal{P}$ and the column rank profile~$\mathcal{Q}$.

\subsection{Online LU decomposition}

The rank profile algorithms of Storjohann and Yang rely on an
online computation of a matrix inverse, either by rank one updates
in~\cite{StoYan14} or by a relaxed variant~\cite{StoYan15}, that corresponds to an online version
of the classic divide and conquer algorithm~\cite{Str69}.
We remark that these inversion algorithms can be advantageously replaced by LU
decomposition algorithms.
Following the argument in~\cite[Fig~2]{JPS:2013}, the use of LU decomposition
for linear system solving offers a speed-up factor of about 3, compared to the
use of a matrix inverse.
Indeed, solving a system from an LU decomposition is done by two
back substitutions, which has the same costs of $2n^2+O(n)$ 
field operations, as applying the inverse of the matrix to the right-hand-side
vector.  But the cost computing an LU decomposition is about three times as
fast as computing a
matrix inverse ($2/3n^3$ versus $2n^3$ when classic matrix arithmetic is used).
We present now how to replace the online matrix inverse by an online LU
decomposition in \cite{StoYan14,StoYan15}.

First consider the iterative construction  of the LU decomposition of
$\submat{A}{\mathcal{P}}{\mathcal{Q}}$ by rank one updates.
Suppose that $A_{s-1} = LU$ with $L$ lower triangular with a
unit diagonal, and $U$ upper triangular. Then we have 
\begin{equation}
\left[\begin{array}{c|c}
A_{s-1} & u \\
\hline 
v & d
\end{array}\right]
 = 
\left[\begin{array}{c|c} L \\ \hline vU^{-1} & 1 \end{array}\right] \cdot
\left[\begin{array}{c|c} U & L^{-1}u \\ \hline  & w \end{array}\right],
\label{eq:onlineLU}
\end{equation}
where $w=d-vU^{-1}L^{-1}u$.
This rank one update of the LU decomposition of the work matrix costs
$2s^2+O(s)$ to compute for an $(s-1)\times (s-1)$ matrix
$\submat{A}{\mathcal{P}}{\mathcal{Q}}$ (compared with $6s^2+O(s)$ in
\cite[Lemma~2]{StoYan14}). The remaining of the paper can then be used,
replacing every multiplication by the pre-computed matrix inverse $x\leftarrow
(\submat{A}{\mathcal{P}}{\mathcal{Q}})^{-1}\cdot b$ by 
two consecutive triangular system solving: $y\leftarrow L^{-1}b; x\leftarrow
U^{-1}y$.
This leads to a Monte Carlo algorithm computing the rank profile matrix in time $2/3r^3+(r^2+m+n+|A|)^{(1+o(1))}$.

\paragraph*{}{
Note that the decomposition in~\eqref{eq:onlineLU} uses two kinds of matrix
vector products: $vU^{-1}$ and $L^{-1}u$ contrary to the matrix inverse
formula~\eqref{eq:onlineInv} used in~\cite{StoYan14} and~\cite{StoYan15}
\begin{equation}
\left[\begin{array}{c|c}
A_{s-1} & u \\
\hline 
v & d
\end{array}\right]^{-1}
 = 
\left[\begin{array}{c|c} I_{s-1} &-A_{s-1}^{-1}u_s(d-v_sA_{s-1}^{-1}u_s)^{-1}\\
    \hline &(d-v_sA_{s-1}^{-1}u_s)^{-1} \end{array}\right] \cdot
\left[\begin{array}{c|c} I_{s-1} \\ \hline -v_s & 1 \end{array}\right] \cdot
\left[\begin{array}{c|c} A_{s-1}^{-1}  \\ \hline  & 1 \end{array}\right],
\label{eq:onlineInv}
\end{equation}
where only one matrix-vector product, $A_{s-1}^{-1} u_s$, appears.
The relaxation is achieved there by anticipating part of these products into larger
matrix-matrix multiplications. This requires that all of the column vectors that
will be treated are known in advance: this is ensured by the selection
of $r$ linearly independent columns of~\cite{CKL:13}.
Since no matrix-vector product involves row vectors $v_s$, it is still possible
to select and add them incrementally one after the other, thus allowing to
compute the row rank profile.

Now, relaxing the LU decomposition of equation~\eqref{eq:onlineLU} in a similar
way would require to also anticipate computations on the rows, namely $vU^{-1}$. This would be either too
expensive, if no pre-selection of $r$ linearly independent rows is performed, or
such a pre-selection would lose the row rank profile information that has to be
computed.
Hence we can not propose a similar improvement of the leading constant in the
relaxed case.
}












\end{document}